\pdfoutput=1
\documentclass[11pt,reqno]{amsart}

\usepackage[margin=1.08in]{geometry}
\usepackage{amsmath,amssymb,amsthm,mathtools}
\usepackage{enumitem}
\usepackage{hyperref}
\usepackage[nameinlink,capitalise]{cleveref}

\hypersetup{colorlinks=true,linkcolor=blue,citecolor=blue,urlcolor=blue}

\makeatletter
\let\oldsqrt\sqrt
\renewcommand{\sqrt}{\@ifnextchar[\sqrt@opt\sqrt@plain}
\newcommand{\sqrt@plain}[1]{\oldsqrt{\,#1\,}}
\newcommand{\sqrt@opt}[2][]{\oldsqrt[#1]{\,#2\,}}
\newcommand{\leftappendixsections}{%
  \def\section{\@startsection{section}{1}%
    \z@{.7\linespacing\@plus\linespacing}{.5\linespacing}%
    {\normalfont\scshape\raggedright}}}
\makeatother

\theoremstyle{plain}
\newtheorem{theorem}{Theorem}[section]
\newtheorem{lemma}[theorem]{Lemma}
\newtheorem{proposition}[theorem]{Proposition}
\newtheorem{corollary}[theorem]{Corollary}
\theoremstyle{definition}
\newtheorem{definition}[theorem]{Definition}
\newtheorem{problem}[theorem]{Problem}
\newtheorem{remark}[theorem]{Remark}

\crefname{theorem}{Theorem}{Theorems}
\crefname{lemma}{Lemma}{Lemmas}
\crefname{proposition}{Proposition}{Propositions}
\crefname{corollary}{Corollary}{Corollaries}
\crefname{definition}{Definition}{Definitions}
\crefname{problem}{Problem}{Problems}
\crefname{remark}{Remark}{Remarks}

\newcommand{\C}{\mathbb C}
\newcommand{\D}{\mathsf D}
\newcommand{\Lin}{\mathsf L}
\newcommand{\Herm}{\mathsf{Herm}}
\newcommand{\CPTP}{\mathsf{CPTP}}
\newcommand{\Tr}{\operatorname{Tr}}
\newcommand{\Id}{\operatorname{Id}}

\newcommand{\CReg}{\operatorname{CReg}}
\newcommand{\Exploit}{\operatorname{Exploit}}
\newcommand{\Reg}{\operatorname{Reg}}
\newcommand{\supp}{\operatorname{supp}}
\newcommand{\eps}{\varepsilon}
\newcommand{\ot}{\otimes}
\newcommand{\ip}[2]{\left\langle #1,#2\right\rangle}
\newcommand{\norm}[1]{\left\lVert #1\right\rVert}
\newcommand{\ket}[1]{\left|#1\right\rangle}
\newcommand{\bra}[1]{\left\langle#1\right|}
\newcommand{\proj}[1]{\ket{#1}\!\bra{#1}}

\title[Coherent swap regret and channel-proof learning]{Coherent Swap Regret and\\ Channel-Proof Learning}
\author{Sohail (Neel) Sarkar}
\thanks{University of Toronto. Email: \texttt{sohail.sarkar@mail.utoronto.ca}.}
\date{May 31, 2026}
\subjclass[2020]{81P45; 91A26; 91A68; 68Q32; 90C47}
\keywords{quantum games; coherent swap regret; CPTP maps; Choi representation; quantum correlated equilibrium; channel-proof learning; semidefinite programming}

\begin{document}

\begin{abstract}
External regret only compares a learner with fixed replacement states.  In a quantum game this misses a natural physical move: a player can apply a local completely positive trace-preserving (CPTP) map to the state it actually received or prepared.  This paper studies \emph{coherent swap regret},
\[
  \CReg_T=
  \sup_{\Lambda\in\CPTP(d)}
  \sum_{t=1}^T
  \Tr\!\left[G_t\bigl(\Lambda(\rho_t)-\rho_t\bigr)\right],
\]
for played states $\rho_t\in\D(\C^d)$ and payoff effects $0\preceq G_t\preceq I$.

The main result is a clean comparison of deviation classes.  Replacement channels give the usual external-regret rate $\Theta(\sqrt{T\log d})$.  Unital channels, including unitary deviations and mixtures of unitaries, have zero minimax regret, since the maximally mixed state is fixed by all of them.  In contrast, deterministic measurement-and-preparation channels already force $\Theta(\sqrt{dT\log d})$ regret in the moderate-horizon regime $T\gtrsim d\log d$, and this rate is also sufficient for all CPTP deviations.  Thus the hard part is not coherence by itself.  It is the ability of non-unital physical rewrites to use information in the recommendation register.

The algorithm learns a deviation channel in normalized Choi form, plays a fixed point of that channel, and updates by entropic mirror ascent on the CPTP Choi slice.  Its analysis is in an oracle or finite-dimensional convex-optimization model: each round needs a relative-entropy convex program over the Choi body and a fixed point of the current channel.  The key estimate is the \emph{Variance Collapse Lemma}.  For $A_t=d(G_t\ot\rho_t^{\mathsf T})$, the mirror bound controls $\sum_t\ip{A_t^2}{J_t}$, and the trace-preserving Choi marginal $\Tr_{\rm out}J_t=I/d$ gives
\[
  \ip{A_t^2}{J_t}\le d\Tr(\rho_t^2)\le d,
\]
removing a factor $\sqrt d$ from the naive $O(d\sqrt{T\log d})$ analysis.  A purity-sensitive version replaces the worst-case variance $dT$ by $\sum_t d\Tr(\rho_t^2)$.

As an application, decentralized full-information learning in finite quantum games reaches an $\eps$-approximate separable quantum correlated equilibrium after
\[
  T=O\!\left(\max_i\frac{d_i\log d_i}{\eps^2}\right)
\]
rounds.  These equilibria are exactly channel-proof separable recommendations.  We also give an SDP audit for local CPTP exploitability, small qubit and rock-paper-scissors examples showing why replacement tests can fail, and a probing-bandit appendix with sublinear pseudo-regret under Haar-random pure-state probes.
\end{abstract}

\maketitle

\section{Introduction}

External no-regret is a useful benchmark, but it is too weak for quantum recommendations.  It only says that no fixed replacement state would have performed better.  It does not say that a player could not profit by applying a physical operation to the state it actually received.  The benchmark used here is therefore regret against \emph{local rewrites of realized behavior}, not only against fixed alternatives.

Classically, this distinction is the gap between coarse correlated equilibrium and Aumann's correlated equilibrium: after seeing its own recommendation, a player should not want to systematically transform that recommendation \cite{Aumann1974}.  The learning notions are internal regret, swap regret, and regret matching \cite{HartMasColell2000,BlumMansour2007,Ito2020}.  In a quantum game, if a player receives or produces a local quantum state $\rho_t$, a physically admissible transformation of that state is a CPTP map $\Lambda$.  Thus the quantum analogue of swap regret compares the learner's payoff to the payoff it would have received by applying the same local quantum operation to every state it actually played:
\[
  \rho_t\mapsto\Lambda(\rho_t).
\]
This paper formalizes that benchmark and gives the optimal rate in the moderate-horizon regime.

The main technical step is the \emph{Variance Collapse Lemma}.  In Choi coordinates the deviation learner receives positive gains
\[
  A_t=d(G_t\ot\rho_t^{\mathsf T}).
\]
A first-order or norm-based Choi-body analysis would pay $\norm{A_t}_\infty\le d$, leading to $O(d\sqrt{T\log d})$ regret.  The trace-preserving constraint on the learned channel gives instead
\[
  \ip{A_t^2}{J_t}
  \le d\Tr(\rho_t^2)
  \le d,
\]
so the second-order term is only $dT$.  This is where the factor $\sqrt d$ is saved, and it is why the final rate matches the classical swap-regret scale.

\subsection{The problem solved}

The first goal is to identify which physical deviation classes make internal regret hard.  Several answers look plausible at first: fixed replacement states, unitary rewrites, unital noise maps, measurement-and-preparation maps, or all CPTP maps.  The threshold turns out to be non-unital use of the recommendation register.  Unital deviations are minimax-trivial, replacement deviations have the external-regret scale, and entanglement-breaking measurement-and-preparation deviations already have the full coherent-swap rate.

Lin, Piliouras, Sim, and Varvitsiotis formulated quantum $\Phi$-equilibria and connected them to no-regret dynamics in quantum games \cite{LinPiliourasSimVarvitsiotis2024}.  In their framework, taking $\Phi$ to be all local CPTP maps yields quantum correlated equilibrium: stability against every unilateral local quantum operation.  The missing ingredient is not external regret where the learner's action is itself a channel.  It is swap regret where the learner's action is a state and the comparator is a local CPTP map applied to the learner's realized state sequence.  The following problem isolates the finite-time all-CPTP version.

\begin{problem}[CPTP internal regret for quantum states]
Given full-information payoff effects $0\preceq G_t\preceq I$ revealed after the learner chooses $\rho_t\in\D(\C^d)$, construct states such that
\[
  \sup_{\Lambda\in\CPTP(d)}
  \sum_{t=1}^T
  \Tr\!\left[G_t\bigl(\Lambda(\rho_t)-\rho_t\bigr)\right]
  =o(T),
\]
with the optimal dependence on $d$ and $T$.
\end{problem}

In the moderate-horizon regime inherited from the classical lower bound, namely $T\ge c d\log d$, the minimax rate is $\Theta(\sqrt{dT\log d})$.  For shorter fixed-accuracy regimes with very large action spaces, recent external-to-swap reductions show that the asymptotics can differ, so the optimality claim here is explicitly the moderate-horizon one.  The upper bound learns a CPTP map as a deviation rule and plays a fixed point of the current learned map.  The lower bound comes from classical swap regret after restricting payoffs and comparator channels to the diagonal classical subalgebra.  Thus even the measurement-and-preparation face of the CPTP class is already worst-case hard.

The result also shows where the difficulty comes from.  Purely unital deviation classes, including unitary channels and mixtures of unitaries, have zero minimax regret: the learner can play the maximally mixed state, which every unital deviation fixes.  Replacement-state deviations have only the ordinary external-regret scale $\Theta(\sqrt{T\log d})$.  The jump to $\Theta(\sqrt{dT\log d})$ occurs as soon as deviations may use the recommendation register through non-unital measurement-and-preparation rewrites.  So coherence alone is not the source of the lower bound.  The source is the larger physical operation class.

\subsection{Why this is a real equilibrium notion}

The equilibrium consequence is a concrete stability statement for mediated quantum recommendations.  Suppose a mediator distributes private quantum registers to players.  A rational player can insert any local physical preprocessing before the payoff device is evaluated, and finite-dimensional physical preprocessings are exactly CPTP maps.  A recommendation state that is stable only against replacement channels is therefore stable only against players who either obey the register or discard it.  It need not be stable against players who use the information in their private register.

We call this requirement \emph{channel-proofness}.  These are the same inequalities that define separable quantum correlated equilibrium, but the name emphasizes the operational meaning: a player should not gain by preprocessing its private register.  The exploitability of a proposed recommendation state is also the value of an SDP over the local Choi body.  This gives a simple audit interpretation.  External-regret dynamics may pass all replacement-deviation tests while still failing a local quantum-operation test by a constant; coherent-swap dynamics are designed to remove that failure mode.

\subsection{Main contributions}

Here is the short list of contributions.

\begin{enumerate}[leftmargin=2em]
\item We define coherent swap regret, the noncommutative internal-regret benchmark against all local CPTP maps applied to the states actually played.
\item We give an algorithm with coherent swap regret
\[
  O\!\left(\sqrt{dT\log d}\right).
\]
The analysis is a self-contained Choi-body mirror argument.  The dimension-saving step is the Variance Collapse Lemma $\ip{A_t^2}{J_t}\le d\Tr(\rho_t^2)$.  The same proof gives the purity-sensitive refinement $\CReg_T\le (2\log d)/\eta+\eta d\sum_t\Tr(\rho_t^2)$.
\item We prove a matching minimax lower bound up to constants in the standard nontrivial classical horizon regime.  The lower bound already holds for diagonal effects and deterministic entanglement-breaking deviation channels.  Conversely, if the comparator class is restricted to unital channels, coherent-swap minimax regret is exactly zero.
\item We prove decentralized convergence to approximate separable quantum correlated equilibrium at rate
\[
  O\!\left(\max_i\sqrt{\frac{d_i\log d_i}{T}}\right).
\]
\item We interpret the equilibrium inequalities as channel-proofness of quantum recommendation protocols and show that local CPTP exploitability is SDP-certifiable.  The audit applies to arbitrary finite-dimensional recommendation states, including entangled candidates, even though the decentralized learning theorem synthesizes separable ones.
\item We give a closed-form qubit audit and a diagonal rock-paper-scissors example showing that replacement stability can miss coherent deviation incentives.
\item We include a probing-bandit partial-feedback theorem.  With random pure-state probes and one Bernoulli payoff sample on probe rounds, the coherent swap pseudo-regret is $O(d^{4/3}T^{2/3}(\log d)^{1/3})$ for nonanticipating payoff effects.
\end{enumerate}

\subsection{Relation to prior work}

Quantum strategies and games have Choi-type semidefinite representations \cite{GutoskiWatrous2007}, and computational aspects of quantum equilibria have been studied by Bostanci and Watrous \cite{BostanciWatrous2022}.  Matrix-multiplicative-weights methods for quantum zero-sum games go back to Jain and Watrous \cite{JainWatrous2009}.  Recent refinements include payoff-based and bandit-style matrix multiplicative weights in quantum games \cite{LotidisMertikopoulosBambosBlanchet2023}, as well as optimistic matrix multiplicative weights for quantum zero-sum Nash computation \cite{VasconcelosEtAl2025}.  Those papers concern equilibrium computation or external no-regret dynamics over quantum states in game forms.  This paper uses a different benchmark: internal regret against local quantum operations applied to the states actually played.

Lin et al.\ \cite{LinPiliourasSimVarvitsiotis2024} develop no-regret learning and quantum $\Phi$-equilibria, including the all-CPTP deviation class that motivates this paper.  The contribution here is an optimal swap-regret rate for that all-CPTP local-deviation benchmark in the full-information, finite-dimensional state-action model.

Online learning of quantum states was studied by Aaronson, Chen, Hazan, Kale, and Nayak \cite{AaronsonChenHazanKaleNayak2018}.  Recent work studies online learning of broad families of quantum objects \cite{BansalGeorgeGhoshSikoraZheng2025}, structured quantum processes \cite{RazaCaroEisertKhatri2024}, and instance-optimal matrix multiplicative weights \cite{GongLiWangZhang2025}.  These results are adjacent but solve a different decision problem.  In particular, Bansal et al.\ give regret bounds when the learner's action is the quantum object itself, including channels.  Here the learner's action is a state, and the comparator is a channel applied to the realized sequence of states.  This is the same distinction as between external regret and swap regret in classical game theory.

Quantum correlated equilibria for shared quantum recommendations have also been studied directly, building on Zhang's quantum strategic game theory model \cite{Zhang2012}.  Deckelbaum considered players sharing entangled pure states and applying arbitrary local operations in classical complete-information games \cite{Deckelbaum2011}.  Wei and Zhang characterized quantum correlated equilibria by semidefinite conditions and gave explicit profitable local measurements when the conditions fail \cite{WeiZhang2013}; related examples show that local quantum operations can create game-theoretic advantage even when standard correlation measures do not mark the state as useful \cite{WeiZhang2017}.  The channel-proofness formulation here is in the same line of work, but the object is dynamic: it asks whether decentralized no-regret learning can synthesize states stable against all local CPTP deviations.

The fixed-point architecture is classical in spirit: Gordon, Greenwald, and Marks developed a convex-game $\Phi$-regret framework \cite{GordonGreenwaldMarks2008}, while Blum and Mansour developed the classical external-to-internal regret reduction \cite{BlumMansour2007}.  Ito proved the tight $\Omega(\sqrt{dT\log d})$ lower bound for classical swap regret in its standard nontrivial regime \cite{Ito2020}; more recent work of Dagan, Daskalakis, Fishelson, and Golowich gives improved external-to-swap reductions for large action spaces and shows that fixed-accuracy polylogarithmic-in-dimension regimes require separate interpretation \cite{DaganDaskalakisFishelsonGolowich2024}.  The matching upper and lower bounds here should be read in the same moderate-horizon regime as the classical embedded lower bound.

\section{Preliminaries}

All Hilbert spaces are finite-dimensional.  For $H\simeq\C^d$, let $\D(H)$ be the density operators on $H$ and let $\Herm(H)$ be the Hermitian operators.  The Hilbert-Schmidt pairing is $\ip{A}{B}=\Tr(AB)$ for Hermitian $A,B$.  The operator norm is $\norm{A}_\infty$ and the trace norm is $\norm{A}_1$.  Logarithms are natural.

For density operators $\rho,\sigma$ with $\supp(\rho)\subseteq\supp(\sigma)$, the quantum relative entropy is
\[
  D(\rho\|\sigma)=\Tr\rho(\log\rho-\log\sigma),
\]
and it is $+\infty$ otherwise.

\subsection{Normalized Choi representation}

Let $H_{\rm in}\simeq H_{\rm out}\simeq\C^d$ and
\[
  \ket\Omega=d^{-1/2}\sum_{i=1}^d \ket{i}_{\rm out}\ket{i}_{\rm in}.
\]
For a linear map $\Lambda:\Lin(H_{\rm in})\to\Lin(H_{\rm out})$, define its normalized Choi operator
\[
  J_\Lambda=(\Lambda\ot\Id)(\proj\Omega).
\]
Then $\Lambda$ is CPTP if and only if
\[
  J_\Lambda\succeq0,
  \qquad
  \Tr_{\rm out}J_\Lambda=I_{\rm in}/d.
\]
Consequently $\Tr J_\Lambda=1$.  Conversely, every such $J$ is the normalized Choi state of a unique CPTP map \cite{Watrous2018}.  We write
\[
  \mathcal C_d=
  \{J\in\D(H_{\rm out}\ot H_{\rm in}):\Tr_{\rm out}J=I/d\}
\]
for the CPTP Choi body.

The action of the channel is recovered as
\[
  \Lambda(\rho)=d\,\Tr_{\rm in}\bigl[(I\ot\rho^{\mathsf T})J_\Lambda\bigr].
\]
All transposes in this paper are taken in the basis defining $\ket\Omega$.  This basis dependence is standard for Choi coordinates; in particular, $\bigl(\rho^{\mathsf T}\bigr)^2=(\rho^2)^{\mathsf T}$ and $\Tr[\bigl(\rho^{\mathsf T}\bigr)^2]=\Tr(\rho^2)$.  Thus, for every Hermitian $G$,
\begin{equation}
  \Tr(G\Lambda(\rho))=
  \ip{d(G\ot\rho^{\mathsf T})}{J_\Lambda}.
  \label{eq:choi-payoff}
\end{equation}

\subsection{Coherent swap regret}

\begin{definition}[Coherent swap regret]
Given played states $\rho_t\in\D(H)$ and payoff observables $G_t\in\Herm(H)$, define
\[
  \CReg_T(\rho_{1:T};G_{1:T})=
  \sup_{\Lambda\in\CPTP(H)}
  \sum_{t=1}^T
  \Tr\!\left[G_t\bigl(\Lambda(\rho_t)-\rho_t\bigr)\right].
\]
When $0\preceq G_t\preceq I$, payoffs lie in $[0,1]$.  If $\norm{G_t}_\infty\le L$, the transformation $G_t\mapsto(G_t+LI)/(2L)$ reduces to the effect case and multiplies regret by $2L$, since CPTP maps preserve trace.
\end{definition}

The benchmark contains external regret because replacement maps $\Lambda_\sigma(\rho)=\sigma$ are CPTP\@.  It is strictly stronger because it also includes transformations depending on the state actually played.

\section{Algorithm: coherent fixed-point Choi descent}

The algorithm is stated in an oracle or finite-dimensional convex-optimization model.  The learner uses two primitives: a solver for the relative-entropy mirror step over the CPTP Choi body \(\mathcal C_d\), and a solver for a fixed point of the current channel.  Thus the theorem is a regret and equilibrium theorem for this optimization model.  It does not claim that a round has a closed-form softmax update or polylogarithmic quantum-circuit runtime.  The mirror update is the main computational step, and its KKT equations form a noncommutative matrix-scaling problem.

The learner maintains a CPTP map $\Lambda_t$ with Choi state $J_t\in\mathcal C_d$.  It then plays a fixed point of that map.

\begin{lemma}[Fixed points]
Every CPTP map $\Lambda:\Lin(H)\to\Lin(H)$ has a fixed point $\rho\in\D(H)$ satisfying $\Lambda(\rho)=\rho$.
\end{lemma}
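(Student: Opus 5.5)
The plan is to apply Brouwer's fixed-point theorem to the restriction of $\Lambda$ to $\D(H)$. The set $\D(H)$ is a nonempty, compact, convex subset of the finite-dimensional real vector space $\Herm(H)$. It is convex because positivity and unit trace are preserved under convex combinations. It is compact because it is closed and contained in the trace-norm unit ball. Since $\Lambda$ is positive and trace preserving, it maps $\D(H)$ into itself. It is also linear, hence continuous in finite dimensions. Brouwer's theorem then gives some $\rho\in\D(H)$ with $\Lambda(\rho)=\rho$.

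If a constructive argument is preferred, which also suits the oracle model used by the algorithm, I would instead use Cesàro averaging. Fix any $\sigma\in\D(H)$ and set
\[
  \rho_n=\frac1n\sum_{k=0}^{n-1}\Lambda^k(\sigma)\in\D(H).
\]
A telescoping computation gives
\[
  \Lambda(\rho_n)-\rho_n=\frac1n\bigl(\Lambda^n(\sigma)-\sigma\bigr),
\]
and its trace norm is at most $2/n$, because both terms are density operators. By compactness of $\D(H)$, some subsequence $\rho_{n_j}$ converges to a limit $\rho\in\D(H)$. Continuity of $\Lambda$ then gives $\Lambda(\rho)-\rho=\lim_j\bigl(\Lambda(\rho_{n_j})-\rho_{n_j}\bigr)=0$.

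Neither step is a serious obstacle. The only points to check are that $\Lambda$ preserves $\D(H)$ and that $\D(H)$ is compact, and both follow directly from the definitions. Complete positivity is not even needed; positivity and trace preservation suffice. I would present the averaging proof, since it avoids invoking Brouwer and shows how a fixed point could be approximated in practice.
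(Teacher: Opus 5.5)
Your proposal is correct and takes the same approach as the paper, which applies Brouwer's theorem to the continuous affine map on the compact convex set $\D(H)$ and notes the Ces\`aro-average argument as an equivalent route. Your telescoping bound $\norm{\Lambda(\rho_n)-\rho_n}_1\le 2/n$ writes out the details of that second route explicitly.
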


\begin{proof}
The map $\Lambda$ is continuous and affine on the compact convex set $\D(H)$.  Brouwer's theorem gives a fixed point.  Equivalently, the Cesaro averages of the orbit of any state have limit points, and every limit point is fixed.  This is the finite-dimensional fixed-point theorem for quantum channels; see, for example, Watrous \cite[Theorem~4.25]{Watrous2018}.
\end{proof}

\begin{lemma}[Interior of mirror iterates]
\label{lem:full-rank-iterates}
Let $\mathcal K\subseteq\D(\C^m)$ be compact and convex, and suppose $P_t\in\mathcal K$ is positive definite.  If $A_t$ is bounded and
\[
  P_{t+1}=\arg\max_{P\in\mathcal K}\{\eta\ip{A_t}{P}-D(P\|P_t)\},
\]
then every maximizer $P_{t+1}$ is positive definite.  Consequently, coherent fixed-point Choi descent initialized at $I/d^2$ has full-rank exact iterates for every finite learning rate and bounded payoff sequence.
\end{lemma}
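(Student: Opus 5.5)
The plan is a perturbation argument: if a maximizer $P_{t+1}$ were singular, moving it slightly toward the full-rank point $P_t$ would strictly increase the objective, because the relative-entropy penalty has infinite slope at the boundary while the linear term has bounded slope. First note that since $\supp(P)\subseteq\supp(P_t)=\C^m$ for every $P$, the objective $F(P)=\eta\ip{A_t}{P}-D(P\|P_t)$ is finite and continuous on the compact set $\mathcal K$. Hence a maximizer exists, and $D(\cdot\|P_t)$ is strictly convex, so the maximizer is unique. The point of the argument is therefore only the rank of that maximizer.

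Suppose $P^\ast$ is a maximizer with nontrivial kernel, and let $\Pi$ be the projector onto $\ker P^\ast$. For $s\in(0,1]$ put $P_s=(1-s)P^\ast+sP_t\in\mathcal K$, which uses convexity of $\mathcal K$. The linear term changes by $s\,\eta\ip{A_t}{P_t-P^\ast}$, which is $O(s)$ with constant at most $2\eta\norm{A_t}_\infty$ because both states have trace one. For the entropy term, write $D(P\|P_t)=-S(P)-\Tr P\log P_t$. The cross term is affine in $s$ with bounded slope, since $\log P_t$ is bounded when $P_t\succ0$. So the whole task reduces to showing that $S(P_s)-S(P^\ast)$ grows faster than linearly in $s$ as $s\downarrow0$.

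This entropy step is the main obstacle. I would handle it with the one-sided derivative. For $0<\lambda$ small the function $-\lambda\log\lambda$ has derivative $-\log\lambda-1\to+\infty$. Using the derivative formula $\frac{d}{ds}S(P_s)=-\Tr\bigl[(P_t-P^\ast)\log P_s\bigr]$, valid for $s>0$ where $P_s\succ0$, the dominant part is $-\Tr[\Pi P_t\Pi\log P_s]$. Here $\Tr(\Pi P_t)>0$ because $P_t\succ0$ and $\Pi\neq0$, while the eigenvalues of $P_s$ on the kernel directions are $O(s)$, so this part behaves like $\Tr(\Pi P_t)\log(1/s)\to+\infty$. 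If the operator-calculus bookkeeping becomes messy, an alternative is concavity of entropy together with block pinching with respect to $\Pi$, which does not decrease entropy. This reduces the kernel block to scalar functions $-\lambda\log\lambda$ with $\lambda=\Theta(s)$, giving $S(P_s)\ge S(P^\ast)+c\,s\log(1/s)-O(s)$.

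Combining the pieces gives $F(P_s)-F(P^\ast)\ge c\,s\log(1/s)-Cs>0$ for small $s$, which contradicts maximality. Hence $P_{t+1}\succ0$.

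For the consequence, apply this with $m=d^2$ and $\mathcal K=\mathcal C_d$, which is compact and convex as the intersection of the state space with an affine subspace. The iteration starts at $I/d^2\in\mathcal C_d$, which is full rank. The gains $A_t=d(G_t\ot\rho_t^{\mathsf T})$ are bounded, with $\norm{A_t}_\infty\le d$. Induction on $t$ then shows that every exact iterate is positive definite.
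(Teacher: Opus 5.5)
Your argument is correct and is essentially the paper's: perturb a putative singular maximizer toward the full-rank $P_t$, observe that the linear term and the cross term $\Tr(P\log P_t)$ have finite one-sided slopes while the von Neumann entropy has slope $+\infty$ along the opening kernel directions, and then induct from $J_1=I/d^2$. One caution: drop the pinching fallback, because pinching can only raise entropy and so gives $S(P_s)\le S(\mathcal P(P_s))$, an upper bound on $S(P_s)$ rather than the lower bound you need; keep the derivative route, which is easily made rigorous, e.g.\ via $-\Tr[P_t\log P_s]\ge\lambda_{\min}(P_t)\,k\log(1/s)$ with $k=\dim\ker P^\ast$ (using $\log P_s\preceq0$ and Weyl's inequality for the $k$ smallest eigenvalues of $P_s$), together with a lower bound on $\Tr[P^\ast\log P_s]$ from operator monotonicity of $\log$ applied to $P_s\succeq(1-s)P^\ast+s\lambda_{\min}(P_t)I$.
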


\begin{proof}
Let $P^+$ be a maximizer and suppose it has a nonzero kernel.  Since $P_t$ is positive definite and belongs to $\mathcal K$, the perturbation $Q_\alpha=(1-\alpha)P^++\alpha P_t$ is feasible for $\alpha\in(0,1)$.  Along every kernel direction of $P^+$, the perturbation opens an eigenvalue of order $\alpha$.  The entropy contribution $-\Tr(Q_\alpha\log Q_\alpha)$ therefore has one-sided derivative $+\infty$ as $\alpha\downarrow0$, whereas the linear term $\eta\ip{A_t}{Q_\alpha}$ and the term $\Tr(Q_\alpha\log P_t)$ have finite one-sided derivatives because $A_t$ and $\log P_t$ are bounded.  Hence the objective increases for sufficiently small $\alpha>0$, contradicting optimality of $P^+$.  Thus $P^+$ is positive definite.  The last claim follows by induction from $J_1=I/d^2$.
\end{proof}

\medskip
\noindent\textbf{Coherent fixed-point Choi descent.}
\begin{enumerate}[leftmargin=2em]
\item Initialize $J_1=I_{H\ot H}/d^2$, the Choi state of the completely depolarizing channel.
\item At round $t$, let $\Lambda_t$ be the channel with Choi state $J_t$.
\item Play any fixed point $\rho_t$ satisfying $\Lambda_t(\rho_t)=\rho_t$.
\item Observe the payoff effect $0\preceq G_t\preceq I$.
\item Define $A_t=d(G_t\ot\rho_t^{\mathsf T})$.
\item Update
\begin{equation}
  J_{t+1}=
  \arg\max_{J\in\mathcal C_d}
  \left\{
    \eta\ip{A_t}{J}-D(J\|J_t)
  \right\}.
  \label{eq:mirror-update}
\end{equation}
\end{enumerate}

By Lemma~\ref{lem:full-rank-iterates}, the exact iterates remain full-rank for finite $\eta$ and bounded $A_t$.

The fixed-point condition gives
\begin{equation}
  \ip{A_t}{J_t}
  =\Tr(G_t\Lambda_t(\rho_t))
  =\Tr(G_t\rho_t).
  \label{eq:fixed-point-payoff}
\end{equation}
For any comparator channel $\Lambda^\star$ with Choi state $J^\star$,
\begin{equation}
  \ip{A_t}{J^\star}
  =\Tr(G_t\Lambda^\star(\rho_t)).
  \label{eq:comparator-payoff}
\end{equation}
Therefore coherent swap regret against $\Lambda^\star$ is exactly external regret of the Choi learner:
\[
  \sum_{t=1}^T\Tr\!\left[G_t\bigl(\Lambda^\star(\rho_t)-\rho_t\bigr)\right]
  =\sum_{t=1}^T\ip{A_t}{J^\star-J_t}.
\]

\begin{remark}[Approximate fixed points]
If $\norm{\Lambda_t(\rho_t)-\rho_t}_1\le\delta_t$, then
\[
  \left|\Tr(G_t\Lambda_t(\rho_t))-\Tr(G_t\rho_t)\right|\le\delta_t
\]
for $0\preceq G_t\preceq I$.  The coherent-swap-regret bound acquires the additive error $\sum_t\delta_t$.  Therefore average equilibrium guarantees require vanishing average fixed-point error: in the $T$-round equilibrium theorem the error term is $T^{-1}\sum_t\delta_t$.  A sufficient implementation condition for an $\eps$-equilibrium is to choose $T$ so that the regret term is at most $\eps/2$ and compute each fixed point to accuracy $\delta_t\le\eps/2$; equivalently, up to constants, one needs both $T=O(d\log d/\eps^2)$ and per-round fixed-point error $O(\eps)$.
\end{remark}

\section{A self-contained Choi-body mirror analysis}
\label{sec:second-order}

This section proves the second-order matrix-multiplicative-weights estimate in the form used later.  The proof is included for completeness.  It is a direct one-step mirror argument for arbitrary convex quantum decision sets, using the Gibbs variational principle and the Golden-Thompson inequality.  The genuinely problem-specific step comes later, when the CPTP Choi constraint gives the variance collapse of Lemma~\ref{lem:variance-collapse}.

\begin{theorem}[Second-order mirror inequality for positive gains]
\label{thm:second-order-md}
Let $\mathcal K\subseteq\D(\C^m)$ be compact and convex.  Let $P_1\in\mathcal K$ be positive definite.  Let $A_1,\ldots,A_T\succeq0$ satisfy $\eta\norm{A_t}_\infty\le1$ for every $t$, and define
\[
  P_{t+1}=
  \arg\max_{P\in\mathcal K}
  \left\{
    \eta\ip{A_t}{P}-D(P\|P_t)
  \right\}.
\]
Then, for every $U\in\mathcal K$,
\[
  \sum_{t=1}^T\ip{A_t}{U-P_t}
  \le
  \frac{D(U\|P_1)}{\eta}
  +\eta\sum_{t=1}^T\ip{A_t^2}{P_t}.
\]
\end{theorem}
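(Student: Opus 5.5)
The plan is the standard one-step mirror argument. I split each round's regret into a Bregman telescoping part and a local second-order part. The first part comes from the first-order optimality of the constrained update. The second comes from the Gibbs variational principle together with the Golden--Thompson inequality.

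\textbf{Step 1: three-point inequality.} By Lemma~\ref{lem:full-rank-iterates} and induction from $P_1\succ0$, every $P_t$ is positive definite. Hence $\log P_t$ is bounded and the objective $F_t(P)=\eta\ip{A_t}{P}-D(P\|P_t)$ is concave and finite on $\mathcal K$. I will show that for every $U\in\mathcal K$,
\[
  \eta\ip{A_t}{U-P_{t+1}}\le D(U\|P_t)-D(U\|P_{t+1})-D(P_{t+1}\|P_t).
\]
Expanding the relative entropies shows that this is equivalent to
\[
  \ip{\eta A_t+\log P_t-\log P_{t+1}}{U-P_{t+1}}\le0.
\]
That inequality is the first-order optimality condition for $P_{t+1}$. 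To obtain it, I differentiate $F_t$ along the feasible segment $P_{t+1}+\alpha(U-P_{t+1})$ at $\alpha=0^+$. Since $P_{t+1}\succ0$, the entropy is smooth there, the derivative of $-\Tr P\log P$ is $-\ip{\log P_{t+1}}{\cdot}$, and the trace term drops out because $\Tr(U-P_{t+1})=0$. If $U$ is singular, $D(U\|P_{t+1})$ is still finite because $P_{t+1}$ is full rank, so no approximation argument should be needed.

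\textbf{Step 2: local bound.} Next I bound
\[
  \eta\ip{A_t}{P_{t+1}-P_t}-D(P_{t+1}\|P_t)\le\eta^2\ip{A_t^2}{P_t}.
\]
First enlarge $\mathcal K$ to all of $\D(\C^m)$. The Gibbs variational principle gives
\[
  \sup_{P}\{\eta\ip{A_t}{P}-D(P\|P_t)\}=\log\Tr\exp(\log P_t+\eta A_t).
\]
Golden--Thompson bounds this by $\log\Tr(P_te^{\eta A_t})$. Since $0\preceq\eta A_t\preceq I$ and $e^x\le1+x+x^2$ on $[0,1]$, the spectral calculus gives
\[
  \Tr(P_te^{\eta A_t})\le1+\eta\ip{A_t}{P_t}+\eta^2\ip{A_t^2}{P_t}.
\]
Finally $\log(1+y)\le y$. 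Subtracting $\eta\ip{A_t}{P_t}$ yields the claim.

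\textbf{Step 3: combine and sum.} Adding Steps 1 and 2 gives
\[
  \eta\ip{A_t}{U-P_t}\le D(U\|P_t)-D(U\|P_{t+1})+\eta^2\ip{A_t^2}{P_t}.
\]
Summing over $t$ telescopes the divergence terms. I then drop $-D(U\|P_{T+1})\le0$ and divide by $\eta$, which gives the theorem.

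I expect the main obstacle to be Step 1. One must justify the variational inequality rigorously for a general compact convex $\mathcal K$ with no explicit constraint structure. This means checking differentiability of the entropy at $P_{t+1}$, which is where full rank from Lemma~\ref{lem:full-rank-iterates} is essential, and confirming that the one-sided directional derivative of $F_t$ is nonpositive at the maximizer. Steps 2 and 3 are routine once the scalar inequality $e^x\le1+x+x^2$ on $[0,1]$ is in place, and this is exactly where the hypothesis $\eta\norm{A_t}_\infty\le1$ together with $A_t\succeq0$ is used.
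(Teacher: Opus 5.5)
Your proposal is correct and follows the paper's proof essentially step for step. Both arguments first write the first-order variational inequality at the full-rank maximizer as the three-point identity. Both then bound $\eta\ip{A_t}{P_{t+1}-P_t}-D(P_{t+1}\|P_t)$ by $\eta^2\ip{A_t^2}{P_t}$ using the Gibbs variational principle, Golden--Thompson, and $e^x\le1+x+x^2$ on $[0,1]$, and finish by telescoping. Your one-sided derivative along the segment $P_{t+1}+\alpha(U-P_{t+1})$ is just a more explicit justification of the optimality condition the paper invokes directly. It works for a general compact convex $\mathcal K$ without the KKT multiplier remark that the paper adds for the Choi slice.
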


\begin{proof}
Fix a round and write $P=P_t$, $P^+=P_{t+1}$, and $A=A_t$.  By Lemma~\ref{lem:full-rank-iterates}, $P$ and $P^+$ are positive definite, so the logarithms and first-order conditions below are well-defined.

Let
\[
  \Phi(Q)=\eta\ip{A}{Q}-D(Q\|P).
\]
The Fr\'echet derivative at $P^+$ is
\[
  \nabla\Phi(P^+)=\eta A-\log P^+-I+\log P.
\]
First-order optimality for maximizing the differentiable concave function $\Phi$ over the convex set $\mathcal K$ gives the variational inequality
\[
  \ip{\eta A-\log P^+-I+\log P}{U-P^+}\le0
  \qquad\forall U\in\mathcal K.
\]
Equivalently, because $\Tr(U-P^+)=0$,
\begin{align*}
  \eta\ip{A}{U-P^+}
  &\le \ip{\log P^+-\log P}{U-P^+} \\
  &=D(U\|P)-D(U\|P^+)-D(P^+\|P).
\end{align*}
For the Choi slice $\mathcal C_d$, this is exactly the KKT multiplier calculation: positivity is inactive at the full-rank optimum, and the equality constraints give
\[
  \eta A-\log P^+-I+\log P
  =\gamma I+I_{\rm out}\ot B
\]
for a scalar $\gamma$ and a Hermitian multiplier $B$.  If $U,P^+\in\mathcal C_d$, then
\[
  \ip{\gamma I}{U-P^+}=\gamma(\Tr U-\Tr P^+)=0
\]
and
\[
  \ip{I_{\rm out}\ot B}{U-P^+}
  =\Tr\!\left[B\bigl(\Tr_{\rm out}U-\Tr_{\rm out}P^+\bigr)\right]=0,
\]
so the multipliers cancel on all feasible differences.

Thus
\begin{align}
  \eta\ip{A}{U-P}
  &\le
  D(U\|P)-D(U\|P^+)
  +\eta\ip{A}{P^+-P}-D(P^+\|P).
  \label{eq:threepoint-gap}
\end{align}
It remains to control the last two terms.

The Gibbs variational principle says that, for Hermitian $H$,
\[
  \log\Tr e^H
  =\sup_{Q\in\D(\C^m)}\{\ip{H}{Q}-\Tr(Q\log Q)\}.
\]
With $H=\log P+\eta A$, for every density matrix $Q$,
\[
  \eta\ip{A}{Q}-D(Q\|P)
  \le
  \log\Tr\exp(\log P+
  \eta A).
\]
Taking $Q=P^+$ and subtracting $\eta\ip{A}{P}$ gives
\begin{equation}
  \eta\ip{A}{P^+-P}-D(P^+\|P)
  \le
  \log\Tr\exp(\log P+
  \eta A)-\eta\ip{A}{P}.
  \label{eq:mgf-reduction}
\end{equation}
By Golden-Thompson \cite{Carlen2010},
\[
  \Tr\exp(\log P+
  \eta A)\le\Tr(Pe^{\eta A}).
\]
Since $0\preceq\eta A\preceq I$, the scalar inequality $e^x\le1+x+x^2$ on $[0,1]$ gives the operator inequality
\[
  e^{\eta A}\preceq I+\eta A+
  \eta^2A^2.
\]
Hence
\[
  \Tr\exp(\log P+
  \eta A)
  \le 1+
  \eta\ip{A}{P}+
  \eta^2\ip{A^2}{P}.
\]
Using $\log(1+x+y)\le x+y$ for $x,y\ge0$,
\[
  \log\Tr\exp(\log P+
  \eta A)-\eta\ip{A}{P}
  \le
  \eta^2\ip{A^2}{P}.
\]
Combining this with \eqref{eq:threepoint-gap} and dividing by $\eta$ yields
\[
  \ip{A_t}{U-P_t}
  \le
  \frac{D(U\|P_t)-D(U\|P_{t+1})}{\eta}
  +
  \eta\ip{A_t^2}{P_t}.
\]
Summing over $t$ telescopes the relative entropies and proves the theorem.
\end{proof}

\begin{remark}
The proof uses only the mirror update displayed in the theorem.  If one rewrites the same exact iterates in a regularized-leader form, the regularizer must be $R(P)=D(P\|P_1)$ and the initialization must be the minimizer $P_1$; this equivalence is not used in the analysis.  The theorem itself is a mirror-descent statement.
\end{remark}

\section{Optimal coherent swap regret}
\label{sec:main-upper}

\begin{lemma}[Choi entropy radius]
\label{lem:choi-radius}
For every $J\in\mathcal C_d$,
\[
  D\left(J\middle\|\frac{I}{d^2}\right)\le2\log d.
\]
The bound is tight for unitary channels.
\end{lemma}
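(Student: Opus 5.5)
The plan is to compute the relative entropy directly against the maximally mixed state on $H_{\rm out}\ot H_{\rm in}$. Since $\log(I/d^2)=-2\log d\cdot I$ and $\Tr J=1$, we have
\[
  D\!\left(J\middle\|\frac{I}{d^2}\right)=\Tr(J\log J)+2\log d=2\log d-S(J),
\]
where $S(J)=-\Tr(J\log J)\ge0$ is the von Neumann entropy. Because $J$ is a density operator, $S(J)\ge0$, and the bound $D(J\|I/d^2)\le 2\log d$ follows immediately. Every $J\in\mathcal C_d$ has full support relative to $I/d^2$, so the relative entropy is finite and the identity above is legitimate. This part is routine and I expect no obstacle.

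For tightness, I would take a unitary channel $\Lambda_U(\rho)=U\rho U^*$. Its normalized Choi operator is
\[
  J_{\Lambda_U}=(U\ot I)\proj\Omega(U^*\ot I),
\]
which is a pure state, so $S(J_{\Lambda_U})=0$ and $D(J_{\Lambda_U}\|I/d^2)=2\log d$ exactly. I would also note that this state lies in $\mathcal C_d$, since $\Tr_{\rm out}J_{\Lambda_U}=\Tr_{\rm out}\proj\Omega=I/d$. The main "obstacle" is only to confirm that the maximum is attained within the Choi slice, not merely on the full density set, and the unitary example settles that.

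Optionally, I could remark on a sharper picture. The Araki--Lieb-type constraint $S(J)\ge |S(J_{\rm out})-S(J_{\rm in})|$ shows that, among elements of $\mathcal C_d$, the entropy zero is attained only by pure Choi states. Since $J_{\rm in}=I/d$, a pure Choi state must be maximally entangled, which forces the channel to be unitary. So the equality cases are exactly the unitary channels. This refinement is not needed for the stated bound.
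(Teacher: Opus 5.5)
Your proof is correct and matches the paper's: both write $D(J\|I/d^2)=2\log d-S(J)$ and use $S(J)\ge0$, with tightness coming from the pure, maximally entangled Choi state of a unitary channel. One wording fix: finiteness of the relative entropy needs only $\supp J\subseteq\supp(I/d^2)$, which holds automatically, and not that $J$ itself has full support, since unitary Choi states have rank one.
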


\begin{proof}
Since $I/d^2$ is maximally mixed on $H\ot H$,
\[
  D\left(J\middle\|\frac{I}{d^2}\right)=-S(J)+2\log d\le2\log d.
\]
For a unitary channel, $J$ is a pure maximally entangled Choi state, so $S(J)=0$.
\end{proof}

\begin{lemma}[Variance collapse]
\label{lem:variance-collapse}
Let $J\in\mathcal C_d$, $\rho\in\D(H)$, and $0\preceq G\preceq I$.  Set
\[
  A=d(G\ot\rho^{\mathsf T}).
\]
Then $A\succeq0$, $\norm{A}_\infty\le d$, and
\[
  \ip{A^2}{J}\le d\Tr(\rho^2)\le d.
\]
\end{lemma}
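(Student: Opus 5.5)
The plan is to dispose of the first two claims by standard tensor-product facts and then bound $\ip{A^2}{J}$ by dropping $G$ and using the trace-preserving marginal of $J$.

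\textbf{Positivity and norm.} The operator $\rho^{\mathsf T}$ has the same spectrum as $\rho$, because transposition in a fixed basis preserves eigenvalues. Hence $0\preceq\rho^{\mathsf T}\preceq I$ and $\norm{\rho^{\mathsf T}}_\infty\le1$. A tensor product of positive semidefinite operators is positive semidefinite, so $A=d(G\ot\rho^{\mathsf T})\succeq0$. Since the operator norm is multiplicative on tensor products, $\norm{A}_\infty=d\norm{G}_\infty\norm{\rho^{\mathsf T}}_\infty\le d$.

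\textbf{Removing $G$.} The variance claim is the only one with content. I will compute
\[
  A^2=d^2\,(G^2\ot(\rho^{\mathsf T})^2)=d^2\,(G^2\ot(\rho^2)^{\mathsf T}).
\]
Because $0\preceq G\preceq I$, we have $0\preceq G^2\preceq I$. The operator $(\rho^2)^{\mathsf T}$ is positive semidefinite, and the map $X\mapsto X\ot(\rho^2)^{\mathsf T}$ preserves the Loewner order. Therefore $G^2\ot(\rho^2)^{\mathsf T}\preceq I_{\rm out}\ot(\rho^2)^{\mathsf T}$. Pairing with $J\succeq0$ preserves this inequality, which gives
\[
  \ip{A^2}{J}\le d^2\,\Tr\!\left[(I_{\rm out}\ot(\rho^2)^{\mathsf T})J\right].
\]

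\textbf{Using the Choi constraint.} This is where the trace-preserving condition enters, and I expect it to be the crux; it is also the only place where membership in $\mathcal C_d$ is used. The operator $I_{\rm out}\ot(\rho^2)^{\mathsf T}$ acts trivially on the output factor. So the pairing reduces to the input marginal:
\[
  \Tr\!\left[(I_{\rm out}\ot(\rho^2)^{\mathsf T})J\right]=\Tr\!\left[(\rho^2)^{\mathsf T}\,\Tr_{\rm out}J\right].
\]
Substituting $\Tr_{\rm out}J=I/d$ gives
\[
  \ip{A^2}{J}\le d^2\cdot\tfrac1d\,\Tr\!\left[(\rho^2)^{\mathsf T}\right]=d\,\Tr(\rho^2),
\]
where the last equality uses the transpose identity recorded in the preliminaries.

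\textbf{Final inequality.} For a density operator, $\Tr(\rho^2)=\sum_i\lambda_i^2\le\sum_i\lambda_i=1$, since each eigenvalue satisfies $\lambda_i\in[0,1]$. Hence $d\,\Tr(\rho^2)\le d$.

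There is no genuine obstacle here. The step that carries the argument is recognizing that one should discard $G$ in the Loewner order, rather than $\rho$, so that the marginal constraint can act on the input factor. Discarding $\rho$ instead, i.e.\ using $\norm{A}_\infty^2\le d^2$, gives only $\ip{A^2}{J}\le d^2$, which is the naive bound.
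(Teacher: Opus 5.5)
Your proposal is correct and follows the paper's own argument: bound $G^2\preceq I$ in the Loewner order, pair the resulting operator $I_{\rm out}\ot(\rho^{\mathsf T})^2$ against $J\succeq0$, and use $\Tr_{\rm out}J=I/d$ to obtain $d\Tr(\rho^2)\le d$. Your added justification of $A\succeq0$ and $\norm{A}_\infty\le d$ just spells out the step the paper states as evident.
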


\begin{proof}
Clearly $A\succeq0$ and $\norm{A}_\infty\le d$.  Since $G^2\preceq I$,
\[
  A^2=d^2\bigl(G^2\ot\bigl(\rho^{\mathsf T}\bigr)^2\bigr)
  \preceq d^2\bigl(I\ot\bigl(\rho^{\mathsf T}\bigr)^2\bigr).
\]
Using $\Tr_{\rm out}J=I/d$,
\begin{align*}
  \ip{A^2}{J}
  &\le d^2\Tr\!\left[J\bigl(I\ot\bigl(\rho^{\mathsf T}\bigr)^2\bigr)\right] \\
  &=d^2\Tr\!\left[(\Tr_{\rm out}J)\bigl(\rho^{\mathsf T}\bigr)^2\right] \\
  &=d\Tr(\rho^2)\le d.
\end{align*}
\end{proof}

\paragraph{Where the dimension is saved.}
If one used only the norm bound $\norm{A_t}_\infty\le d$, the second-order term in \cref{thm:second-order-md} would be bounded as
\[
  \ip{A_t^2}{J_t}\le \norm{A_t}_\infty^2\Tr J_t\le d^2,
\]
which would give
\[
  \CReg_T\le \frac{2\log d}{\eta}+\eta d^2T
  =O\!\left(d\sqrt{T\log d}\right).
\]
The variance collapse replaces $d^2$ by $d\Tr(\rho_t^2)\le d$.  The reason is that the Choi iterate is not an arbitrary density matrix on $H\otimes H$; it satisfies the trace-preserving marginal constraint $\Tr_{\rm out}J_t=I/d$.  This is where the CPTP constraint improves the dimension dependence, and it is why the final rate matches the classical swap-regret scale.

\begin{theorem}[Coherent swap regret upper bound]
\label{thm:upper}
Let $d\ge2$ and let $0\preceq G_t\preceq I$ be arbitrary.  Coherent fixed-point Choi descent with any $\eta\le1/d$ satisfies
\[
  \CReg_T\le\frac{2\log d}{\eta}+\eta dT.
\]
If $T\ge2d\log d$ and $\eta=\sqrt{2\log d/(dT)}$, then
\[
  \CReg_T\le2\sqrt{2dT\log d}.
\]
For observables $H_t$ with $\norm{H_t}_\infty\le1$, applying the algorithm to $(H_t+I)/2$ gives regret at most $4\sqrt{2dT\log d}$.
\end{theorem}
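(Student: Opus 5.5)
The plan is to combine the reduction to Choi-space external regret with \cref{thm:second-order-md}, then apply \cref{lem:choi-radius} and \cref{lem:variance-collapse}. We work on the Choi body $\mathcal C_d$, which is compact and convex. We initialize at $P_1=J_1=I/d^2$, which is positive definite and lies in $\mathcal C_d$ because $\Tr_{\rm out}(I/d^2)=I/d$. The gains are $A_t=d(G_t\ot\rho_t^{\mathsf T})\succeq0$. By \cref{lem:variance-collapse}, $\norm{A_t}_\infty\le d$, so the hypothesis $\eta\norm{A_t}_\infty\le1$ of \cref{thm:second-order-md} holds whenever $\eta\le1/d$.

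By \eqref{eq:fixed-point-payoff} and \eqref{eq:comparator-payoff}, for every comparator $\Lambda^\star$ with Choi state $J^\star\in\mathcal C_d$, the coherent swap regret equals $\sum_t\ip{A_t}{J^\star-J_t}$. \cref{thm:second-order-md} with $U=J^\star$ bounds this by $D(J^\star\|I/d^2)/\eta+\eta\sum_t\ip{A_t^2}{J_t}$. The first term is at most $2\log d/\eta$ by \cref{lem:choi-radius}. Since each $J_t\in\mathcal C_d$, \cref{lem:variance-collapse} applies to every iterate, so each summand of the second term is at most $d\Tr(\rho_t^2)\le d$. Taking the supremum over $J^\star$ gives $\CReg_T\le 2\log d/\eta+\eta dT$, together with the purity-sensitive form along the way.

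For the tuned bound, set $\eta=\sqrt{2\log d/(dT)}$. We must check $\eta\le1/d$, which is equivalent to $2\log d/(dT)\le1/d^2$, i.e.\ $T\ge2d\log d$, exactly the stated hypothesis. Substituting, $2\log d/\eta=\sqrt{2dT\log d}$ and $\eta dT=\sqrt{2dT\log d}$, so the sum is $2\sqrt{2dT\log d}$.

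For the observable case, $(H_t+I)/2$ is an effect. Since $\Tr\Lambda(\rho_t)=\Tr\rho_t=1$, the identity part cancels: $\Tr[\tfrac{H_t+I}{2}(\Lambda(\rho_t)-\rho_t)]=\tfrac12\Tr[H_t(\Lambda(\rho_t)-\rho_t)]$. Regret for $H_t$ is therefore twice the effect regret, which gives $4\sqrt{2dT\log d}$.

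There is no real obstacle here; all the substance sits in the earlier lemmas. The only points that need care are that the variance bound is applied at the iterate $J_t$ rather than at the comparator, which works because every $J_t$ lies in $\mathcal C_d$, and that the step-size constraint $\eta\le1/d$ is precisely what forces the horizon condition $T\ge 2d\log d$.
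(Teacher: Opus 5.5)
Your proposal is correct and follows the paper's proof essentially step for step. You reduce to Choi-space external regret via the fixed-point and comparator identities, then apply \cref{thm:second-order-md} (valid since $\eta\le 1/d$ and $\norm{A_t}_\infty\le d$), bound the entropy term by \cref{lem:choi-radius} and the variance term by \cref{lem:variance-collapse}, and finish by tuning $\eta$ and shift-rescaling for observables. Your explicit checks that $\eta\le 1/d$ is equivalent to $T\ge 2d\log d$ and that the identity part cancels because $\Lambda$ is trace-preserving fill in arithmetic the paper leaves implicit.
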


\begin{proof}
Fix a comparator channel $\Lambda^\star$ with Choi state $J^\star$.  By \eqref{eq:fixed-point-payoff} and \eqref{eq:comparator-payoff},
\[
  \sum_{t=1}^T\Tr\!\left[G_t(\Lambda^\star(\rho_t)-\rho_t)\right]
  =\sum_{t=1}^T\ip{A_t}{J^\star-J_t}.
\]
Since $\eta\le1/d$ and $\norm{A_t}_\infty\le d$, \cref{thm:second-order-md} applies.  Thus
\[
  \sum_{t=1}^T\ip{A_t}{J^\star-J_t}
  \le
  \frac{D(J^\star\|J_1)}{\eta}
  +\eta\sum_{t=1}^T\ip{A_t^2}{J_t}.
\]
Here $J_1=I/d^2$.  By Lemma~\ref{lem:choi-radius}, the entropy term is at most $2\log d$, and by Lemma~\ref{lem:variance-collapse}, each variance term is at most $d$.  Taking the supremum over $\Lambda^\star$ proves
\[
  \CReg_T\le\frac{2\log d}{\eta}+\eta dT.
\]
The displayed choice of $\eta$ is feasible when $T\ge2d\log d$ and gives $2\sqrt{2dT\log d}$.  The norm-bounded observable claim follows by the shift-and-rescale argument in the definition of coherent swap regret.
\end{proof}

\begin{theorem}[Purity-sensitive coherent swap regret]
\label{thm:purity-sensitive}
For a run of coherent fixed-point Choi descent, define
\[
  V_T=\sum_{t=1}^T d\Tr(\rho_t^2),
  \qquad
  B_T=\max_{t\le T}d\norm{\rho_t}_\infty.
\]
For every comparator channel and every learning rate $\eta\le1/B_T$,
\[
  \CReg_T\le \frac{2\log d}{\eta}+\eta V_T.
\]
Consequently, if $V_T$ and $B_T$ are known in advance and
\[
  \eta=\min\left\{\frac1{B_T},\sqrt{\frac{2\log d}{V_T}}\right\},
\]
then
\[
  \CReg_T\le 2\sqrt{2V_T\log d}+4B_T\log d.
\]
Maximally mixed fixed-point play has $V_T=T$ and $B_T=1$, giving the external-regret scale $O(\sqrt{T\log d}+\log d)$; pure fixed-point play has $V_T=dT$ and recovers the worst-case $O(\sqrt{dT\log d})$ rate.
\end{theorem}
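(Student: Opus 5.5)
The plan is to rerun the proof of \cref{thm:upper} with two replacements: the sharper operator-norm bound $\norm{A_t}_\infty = d\norm{G_t}_\infty\norm{\rho_t}_\infty \le d\norm{\rho_t}_\infty \le B_T$ in place of $d$, and the purity form of Lemma~\ref{lem:variance-collapse} in place of its worst-case form.

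\textbf{Reduction.} Fix a comparator $\Lambda^\star$ with Choi state $J^\star\in\mathcal C_d$. By \eqref{eq:fixed-point-payoff} and \eqref{eq:comparator-payoff}, the regret against $\Lambda^\star$ equals $\sum_t\ip{A_t}{J^\star-J_t}$, with $A_t=d(G_t\ot\rho_t^{\mathsf T})\succeq0$. Since $\norm{X\ot Y}_\infty=\norm{X}_\infty\norm{Y}_\infty$ and transposition preserves the spectrum, we get $\norm{A_t}_\infty\le d\norm{\rho_t}_\infty\le B_T$. Hence $\eta\le 1/B_T$ gives $\eta\norm{A_t}_\infty\le1$, which is the only hypothesis of \cref{thm:second-order-md} besides positivity.

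\textbf{Applying the mirror inequality.} Take $\mathcal K=\mathcal C_d$, $P_1=I/d^2$ and $U=J^\star$. Lemma~\ref{lem:choi-radius} bounds $D(J^\star\|J_1)\le 2\log d$. Lemma~\ref{lem:variance-collapse} gives $\ip{A_t^2}{J_t}\le d\Tr(\rho_t^2)$. Summing yields
\[
\CReg_T\le \frac{2\log d}{\eta}+\eta V_T,
\]
and taking the supremum over $\Lambda^\star$ proves the first claim.

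\textbf{Choosing $\eta$.} This is the only place needing care. We know $V_T,B_T$ in advance, so $\eta$ is a constant and $\rho_{1:T}$ depends on $\eta$. The wording ``known in advance'' is used exactly so that the bound holds for the realized run; I would simply take this as stated. There are two cases.
\begin{itemize}
\item If $\sqrt{2\log d/V_T}\le 1/B_T$, then $\eta=\sqrt{2\log d/V_T}$ and the bound equals $2\sqrt{2V_T\log d}$.
\item Otherwise $\eta=1/B_T$ and $1/B_T<\sqrt{2\log d/V_T}$. Then $\eta V_T\le\sqrt{2V_T\log d}$ and $2\log d/\eta=2B_T\log d$. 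The total is at most $\sqrt{2V_T\log d}+2B_T\log d$, which is within the stated $2\sqrt{2V_T\log d}+4B_T\log d$.
\end{itemize}

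\textbf{The two examples.} Maximally mixed play gives $\Tr(\rho^2)=1/d$ and $\norm{\rho}_\infty=1/d$, so $V_T=T$ and $B_T=1$. Pure play gives $\Tr(\rho^2)=1$ and $\norm{\rho}_\infty=1$, so $V_T=dT$ and $B_T=d$. In the pure case the $4d\log d$ term is dominated by $\sqrt{dT\log d}$ in the regime $T\gtrsim d\log d$, recovering the worst-case rate of \cref{thm:upper}.
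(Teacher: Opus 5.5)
Your proposal is correct and follows the paper's proof essentially step for step. It uses the same reduction to Choi-learner regret, the same step-size check $\norm{A_t}_\infty\le d\norm{\rho_t}_\infty\le B_T$ for \cref{thm:second-order-md}, and the same appeal to Lemmas~\ref{lem:choi-radius} and~\ref{lem:variance-collapse}. The only difference is cosmetic: in the clipped case you bound $\eta V_T\le\sqrt{2V_T\log d}$ directly, whereas the paper uses $V_T<2B_T^2\log d$ to absorb the total into $4B_T\log d$; both stay within the stated bound.
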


\begin{proof}
For a comparator Choi state $J^\star$,
\[
  \sum_{t=1}^T\Tr\!\left[G_t(\Lambda^\star(\rho_t)-\rho_t)\right]
  =\sum_{t=1}^T\ip{A_t}{J^\star-J_t}.
\]
The second-order mirror inequality gives
\[
  \sum_{t=1}^T\ip{A_t}{J^\star-J_t}
  \le
  \frac{D(J^\star\|J_1)}{\eta}
  +\eta\sum_{t=1}^T\ip{A_t^2}{J_t}.
\]
The entropy radius is at most $2\log d$, and Lemma~\ref{lem:variance-collapse} gives $\ip{A_t^2}{J_t}\le d\Tr(\rho_t^2)$, yielding the fixed-rate bound.  The step-size condition in Theorem~\ref{thm:second-order-md} follows from
\[
  \norm{A_t}_\infty\le d\norm{\rho_t}_\infty\le B_T.
\]
Optimizing with the displayed clipped learning rate gives the stated bound.  If the unconstrained optimum is feasible, the value is $2\sqrt{2V_T\log d}$.  Otherwise $V_T<2B_T^2\log d$, and the choice $\eta=1/B_T$ gives a bound at most $4B_T\log d$, which is absorbed by the displayed expression.
\end{proof}

\begin{proposition}[Unital comparator classes have zero minimax regret]
\label{prop:unital-zero}
Let $\mathsf U_d$ be any class of unital CPTP maps on $\C^d$, meaning $\Lambda(I)=I$ for every $\Lambda\in\mathsf U_d$.  If coherent swap regret is defined with the supremum restricted to $\mathsf U_d$, then the minimax regret is exactly zero for every horizon and every payoff sequence.
\end{proposition}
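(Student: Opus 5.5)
The plan is to prove two inequalities: the minimax value is at most $0$, and it is at least $0$.

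For the upper bound, the learner plays $\rho_t=I/d$ in every round, independently of the payoffs. Every $\Lambda\in\mathsf U_d$ is linear and unital, so
\[
  \Lambda(I/d)=\Lambda(I)/d=I/d.
\]
Hence every summand $\Tr[G_t(\Lambda(\rho_t)-\rho_t)]$ vanishes. The supremum over $\mathsf U_d$ is therefore exactly $0$ for every horizon and every payoff sequence, whether effects or arbitrary Hermitian observables.

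For the lower bound, it suffices to show that every learner has regret at least $0$ against every sequence. I take two routes.
\begin{itemize}[leftmargin=2em]
\item If the identity channel belongs to $\mathsf U_d$, it contributes exactly $0$, so the supremum is nonnegative.
\item For a general class, I use the convention that the benchmark always includes not deviating. This is the natural reading of "restricting the supremum", since the identity channel is unital and coherent swap regret compares against rewrites.
\end{itemize}
If the class need not contain $\Id$, I will state the claim for $\mathsf U_d\cup\{\Id\}$. Then the lower bound $\ge0$ is immediate and the equality is exact.

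The main obstacle is not mathematical but one of interpretation. "Minimax regret is exactly zero" needs the comparator class to contain the identity, or else an arbitrary class (for example a single unitary) could make the supremum negative for some learners. A negative supremum would only make the minimax value $\le 0$, and our upper-bound strategy still attains $0$. So I will phrase the proof as follows: the maximally mixed strategy guarantees regret $\le 0$ uniformly, and the identity comparator gives $\ge 0$. I will also note that no learning, fixed-point computation, or mirror step is needed, which contrasts with Theorem~\ref{thm:upper}.
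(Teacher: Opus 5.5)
Your upper bound is the paper's argument: playing $\rho_t=I/d$ makes every summand vanish for every unital $\Lambda$, so the restricted regret is $0$ on every payoff sequence.

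The gap is in the lower bound. You conclude that the statement ``needs the comparator class to contain the identity,'' and you propose replacing $\mathsf U_d$ by $\mathsf U_d\cup\{\Id\}$. That conclusion is false, and the modification weakens a statement that is true as written. Your identity-comparator argument is valid when $\Id\in\mathsf U_d$, but it misses the simpler reason the value is nonnegative. Minimax regret is
\[
  \inf_{\mathcal A}\,\sup_{G_{1:T}}\,\sup_{\Lambda\in\mathsf U_d}\sum_{t=1}^T\Tr\!\left[G_t\bigl(\Lambda(\rho_t)-\rho_t\bigr)\right],
\]
and the adversary may choose $G_t=0$, or more generally $G_t=cI$.

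Every CPTP map preserves trace, so $\Tr[cI(\Lambda(\rho_t)-\rho_t)]=0$ for every $\Lambda$. Against this sequence, every learner, deterministic or randomized, has regret exactly $0$ for any nonempty class $\mathsf U_d$. Hence the supremum over payoff sequences is at least $0$ for every learner, and the minimax value is at least $0$ with no convention about the identity. This is what the paper's one-line appeal to ``nonnegativity of minimax regret'' amounts to.

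Your single-unitary example shows only that $\sup_{\Lambda\in\mathsf U_d}(\cdots)$ can be negative for a particular learner on a particular sequence. Once the adversary maximizes over sequences, the value cannot fall below $0$.

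The complete argument is therefore: maximally mixed play gives regret at most $0$ uniformly, and the trivial payoff sequence forces regret at least $0$ for every learner. The proposition then holds as stated for every nonempty unital class; only the empty class needs a separate convention.
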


\begin{proof}
The learner plays $\rho_t=I/d$ at every round.  For every unital channel $\Lambda$,
\[
  \Lambda(\rho_t)=\Lambda(I/d)=I/d=\rho_t.
\]
Therefore every summand $\Tr[G_t(\Lambda(\rho_t)-\rho_t)]$ is zero.  Nonnegativity of minimax regret gives equality.
\end{proof}

\begin{proposition}[Replacement deviations have the external-regret scale]
\label{prop:replacement-scale}
Let $\mathfrak R_T^{\rm repl}(d)$ denote the minimax regret when the comparator class is restricted to replacement channels $\Lambda_\sigma(\rho)=\sigma$.  There is a universal constant $c>0$ such that, for $T\ge c\log d$,
\[
  \mathfrak R_T^{\rm repl}(d)=\Theta(\sqrt{T\log d}).
\]
\end{proposition}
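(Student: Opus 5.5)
The proof has two halves, an upper bound and a lower bound, both reduced to standard external-regret facts. Against a replacement channel $\Lambda_\sigma$, the regret is
\[
  \sum_{t=1}^T\Tr\!\left[G_t\bigl(\Lambda_\sigma(\rho_t)-\rho_t\bigr)\right]=\sum_{t=1}^T\Tr[G_t(\sigma-\rho_t)].
\]
Taking the supremum over $\sigma\in\D(\C^d)$ turns restricted coherent swap regret into ordinary external regret over the state space. So $\mathfrak R_T^{\rm repl}(d)$ is exactly the minimax external regret for online linear optimization over density matrices with gains $0\preceq G_t\preceq I$.

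For the upper bound I will run matrix multiplicative weights on $\D(\C^d)$. This is \cref{thm:second-order-md} with $\mathcal K=\D(\C^d)$, $P_1=I/d$, $A_t=G_t$, and $\eta\le 1$. Two inputs control the bound:
\begin{itemize}[leftmargin=2em]
\item the entropy radius $D(\sigma\|I/d)=\log d-S(\sigma)\le\log d$;
\item the variance term $\ip{G_t^2}{P_t}\le 1$, since $G_t^2\preceq I$.
\end{itemize}
Together these give regret at most $\log d/\eta+\eta T$. Choosing $\eta=\sqrt{\log d/T}$ is feasible once $T\ge\log d$ and yields $2\sqrt{T\log d}$. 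This gives the $O(\sqrt{T\log d})$ half for $T\ge c\log d$.

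For the lower bound I will embed the classical experts problem on the diagonal. I restrict the adversary to diagonal effects $G_t=\sum_i g_t(i)\proj{i}$ with $g_t\in\{0,1\}^d$. Then the learner's payoff depends only on $p_t=\operatorname{diag}(\rho_t)$, because $\Tr(G_t\rho_t)=\sum_i g_t(i)p_t(i)$. The replacement comparators include every $\proj{i}$, and against diagonal $G_t$ no $\sigma$ beats the best basis state. So the quantum game with diagonal payoffs is exactly the classical $d$-expert problem with $[0,1]$ gains, and its minimax regret is a lower bound for $\mathfrak R_T^{\rm repl}(d)$.

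I will then invoke the classical $\Omega(\sqrt{T\log d})$ experts lower bound, valid for $T\ge c\log d$. It comes from i.i.d. uniform Bernoulli payoffs: the learner's expected regret equals $\mathbb E\max_i\sum_t(g_t(i)-1/2)$, which is of order $\sqrt{T\log d}$ by the expected maximum of $d$ independent random walks. The main obstacle is this anti-concentration step, namely the lower bound on the expected maximum of $d$ i.i.d. centered binomials, which requires $T\gtrsim\log d$. I would cite it as the standard result (Cesa-Bianchi–Lugosi) rather than reprove it. If a self-contained proof is wanted, a Gaussian approximation with a Berry–Esseen-type or Paley–Zygmund argument per coordinate, combined with independence across the $d$ coordinates, gives $\Pr[\max_i S_i\ge c'\sqrt{T\log d}]\ge 1/2$. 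A randomized adversary suffices because minimax regret is at least the Bayes regret under any payoff distribution.
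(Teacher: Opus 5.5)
Your proposal is correct and takes the paper's route. You reduce to external regret over $\D(\C^d)$, take the upper bound from \cref{thm:second-order-md} on the spectraplex with entropy radius $\log d$ and second-order term at most $T$, and get the lower bound by embedding the $d$-expert problem through diagonal effects; your remark that the learner's payoff depends only on $\operatorname{diag}(\rho_t)$ makes that embedding slightly more precise than the paper's phrasing.

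One caution about your optional self-contained route. A Berry--Esseen error of order $T^{-1/2}$ cannot resolve per-coordinate tail probabilities that are polynomially small in $d$ unless $T$ is itself polynomially large in $d$. To cover the full range $T\ge c\log d$, you would need a reverse-Chernoff lower bound on binomial tails, or simply a cited non-asymptotic form of the experts lower bound.
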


\begin{proof}
The restricted benchmark is exactly external regret over density matrices:
\[
  \sup_{\sigma\in\D(\C^d)}\sum_{t=1}^T\Tr[G_t(\sigma-\rho_t)].
\]
The upper bound is the standard entropic matrix-weights bound on the $d$-dimensional spectraplex: the entropy radius around $I/d$ is at most $\log d$, and for effects $0\preceq G_t\preceq I$ the second-order term is at most $T$.  The lower bound restricts to diagonal states and diagonal effects, reducing to prediction with $d$ experts.
\end{proof}

The unital proposition is a useful sanity check on the lower bound.  The hard deviations are not arbitrary physical noise maps; they must include non-unital transformations that can systematically rewrite a recommendation into a more rewarding state.  The diagonal classical modification maps used in the lower bound are exactly of this non-unital kind.

\begin{theorem}[Minimax lower bound in the moderate-horizon regime]
\label{thm:lower}
Let
\[
  \mathfrak R_T^{\rm coh}(d)
  :=
  \inf_{\mathcal A}\sup_{G_{1:T}}
  \mathbb E\,\CReg_T(\mathcal A;G_{1:T})
\]
be the expected minimax coherent-swap regret for payoff effects $0\preceq G_t\preceq I$ on $\D(\C^d)$.  There are universal constants $c_0,c_1>0$ such that, for all $d\ge2$ and all $T\ge c_0d\log d$,
\[
  \mathfrak R_T^{\rm coh}(d)
  \ge
  c_1\sqrt{dT\log d}.
\]
Since \cref{thm:upper} is deterministic for every adversarial sequence, it also upper-bounds the expected regret of any randomized implementation.  Thus, for $T\ge C d\log d$ with a sufficiently large universal constant $C$,
\[
  \mathfrak R_T^{\rm coh}(d)
  =\Theta(\sqrt{dT\log d}).
\]
The restriction $T\gtrsim d\log d$ is part of the statement, not a hidden assumption: it is the regime in which the embedded classical lower bound is the relevant minimax obstruction and in which the displayed upper-bound tuning satisfies $\eta\le1/d$.
\end{theorem}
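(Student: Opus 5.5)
The lower bound will come from a reduction to classical swap regret, after which Ito's $\Omega(\sqrt{dT\log d})$ bound \cite{Ito2020} does the work. Fix the computational basis $\{\ket i\}$ and let the adversary use only diagonal effects $G_t=\sum_i g_t(i)\proj i$ with $g_t\in[0,1]^d$. For an arbitrary (possibly randomized) quantum learner $\mathcal A$ producing $\rho_t$, I define the induced classical learner that plays the distribution $p_t(i)=\bra i\rho_t\ket i$. Since $G_t$ is diagonal, $\Tr(G_t\rho_t)=\ip{g_t}{p_t}$. The classical learner observes exactly the losses $g_t$, and $\rho_t$ depends only on past $G_s$, hence only on past $g_s$ and internal randomness. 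So the induced learner is a legitimate full-information classical algorithm over $d$ actions, with the same adaptivity structure.

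Next I will embed the classical comparators in $\CPTP(d)$. For any swap function $\phi:[d]\to[d]$, take the deterministic measure-and-prepare channel
\[
  \Lambda_\phi(\rho)=\sum_{i=1}^d \bra i\rho\ket i\,\proj{\phi(i)}.
\]
It is CPTP and entanglement-breaking, and it is non-unital whenever $\phi$ is not a bijection. It satisfies $\Tr(G_t\Lambda_\phi(\rho_t))=\sum_i p_t(i)g_t(\phi(i))$. Therefore
\[
  \CReg_T(\rho_{1:T};G_{1:T})\ge\max_\phi\sum_t\sum_i p_t(i)\bigl(g_t(\phi(i))-g_t(i)\bigr),
\]
and the right-hand side is exactly the classical swap regret of the induced learner in the fractional or expected form. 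Taking expectations and then $\sup$ over diagonal sequences, $\mathfrak R_T^{\rm coh}(d)$ dominates the classical minimax swap regret with gains in $[0,1]$.

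Then I invoke Ito's theorem: for $T\ge c_0 d\log d$, every classical algorithm suffers expected swap regret at least $c_1\sqrt{dT\log d}$ against some adversary. The main point needing care is matching conventions. If Ito's bound is stated for losses, I convert via $g\mapsto 1-g$, which flips the sign appropriately. If it is stated for realized actions drawn from $p_t$ rather than for distributions $p_t$, I need the comparison in the right direction. Swap regret of distributions is $\max_\phi$ of an expectation, and the expected realized swap regret is $\mathbb E\max_\phi$, which is larger. So I will check that Ito's lower-bound construction applies to the distribution-valued (fractional) swap regret. I expect this to hold because it holds for oblivious randomized adversaries against deterministic fractional learners: minimax arguments via Yao's principle bound $\mathbb E_g\max_\phi$ of the fractional quantity. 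This reconciliation is the step I expect to be the main obstacle, and I would cite the fractional version or rederive it through Yao's principle with a random oblivious loss sequence.

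Finally, the $\Theta$ statement combines this bound with \cref{thm:upper}. For $T\ge C d\log d$ with $C\ge\max(c_0,2)$, the tuned rate $\eta=\sqrt{2\log d/(dT)}\le 1/d$ is feasible, giving $\CReg_T\le2\sqrt{2dT\log d}$ deterministically on every sequence, hence also in expectation.
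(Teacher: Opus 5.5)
Your proposal is correct and is essentially the paper's proof: restrict the adversary to diagonal effects, pass to the induced distributions $p_t(i)=\bra i\rho_t\ket i$, realize each swap function $\phi$ by the measure-and-prepare channel $\Lambda_\phi$, and invoke Ito's classical bound, with the $\Theta$ statement following from \cref{thm:upper} once $T\ge 2d\log d$. The paper's one-line citation of Ito leaves the fractional-versus-realized convention check you flag implicit, and the check goes through: Ito, following Blum and Mansour, defines swap regret on the played distributions $p_t$, which is exactly the quantity that $\CReg_T$ dominates under your embedding; moreover, a randomized learner can be replaced by its averaged play $\bar p_t=\mathbb E\,p_t$, because that quantity is a maximum of linear functions and hence convex in $p_{1:T}$.
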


\begin{proof}
Restrict the adversary to diagonal effects
\[
  G_t=\sum_{i=1}^d g_{t,i}\proj{i},\qquad g_{t,i}\in[0,1].
\]
A quantum state $\rho_t$ then induces the classical mixed action $p_t(i)=\langle i|\rho_t|i\rangle$, and $\Tr(G_t\rho_t)=\sum_i p_t(i)g_{t,i}$.  Every classical modification rule $\phi:[d]\to[d]$ is represented by the CPTP map
\[
  \Lambda_\phi(X)=\sum_{i=1}^d\langle i|X|i\rangle\proj{\phi(i)}.
\]
For this map,
\[
  \Tr(G_t\Lambda_\phi(\rho_t))=\sum_i p_t(i)g_{t,\phi(i)}.
\]
Thus the coherent-swap problem contains the classical swap-regret problem as a diagonal subgame.  Ito's classical lower bound gives $\Omega(\sqrt{dT\log d})$ in the moderate-horizon range stated above \cite{Ito2020}.  The same lower bound therefore applies to the quantum problem.  This lower bound already comes from an entanglement-breaking, diagonal subproblem.  It proves worst-case optimality in the stated regime without using any genuinely noncommutative adversarial construction.
\end{proof}

\begin{corollary}[Hardness already for entanglement-breaking deviations]
\label{cor:eb-lower}
The lower bound in \cref{thm:lower} remains valid if the adversary is restricted to diagonal payoff effects and the comparator class in the regret benchmark is restricted to deterministic measurement-and-preparation channels
\[
  \Lambda_\phi(X)=\sum_{i=1}^d\langle i|X|i\rangle\proj{\phi(i)},
  \qquad \phi:[d]\to[d].
\]
In particular, the worst-case obstruction is already present inside a strict entanglement-breaking face of the CPTP deviation class.
\end{corollary}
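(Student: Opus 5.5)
The plan is to reread the proof of \cref{thm:lower} and observe that it never uses any comparator other than the maps $\Lambda_\phi$, nor any payoff outside the diagonal family. Fix an arbitrary (possibly randomized) quantum learner $\mathcal A$ producing states $\rho_t$. Restrict the adversary to diagonal effects $G_t=\sum_i g_{t,i}\proj{i}$ with $g_{t,i}\in[0,1]$. Let $\CReg^{\rm EB}_T$ denote the regret with the supremum taken only over the maps $\Lambda_\phi$, $\phi:[d]\to[d]$. I will show that $\CReg^{\rm EB}_T$ is exactly the classical swap regret of an induced classical learner. Ito's bound \cite{Ito2020} then gives $\Omega(\sqrt{dT\log d})$ for $T\ge c_0 d\log d$ even for this smaller benchmark.

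\textbf{Step 1: these maps are deterministic measure-and-prepare channels.} Each $\Lambda_\phi$ is CPTP. Its Kraus operators are $K_i=\ket{\phi(i)}\bra{i}$, and $\sum_iK_i^\dagger K_i=\sum_i\proj{i}=I$. It is entanglement-breaking because it measures in the computational basis and then prepares the basis state $\ket{\phi(i)}$. The set of such maps lies on a proper face of the CPTP body and excludes, for example, all non-classical unitaries. This confirms that the restricted benchmark is a legitimate sub-benchmark, so the corollary's "in particular" clause holds.

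\textbf{Step 2: reduce to classical swap regret.} Define the induced classical learner by $p_t(i)=\langle i|\rho_t|i\rangle$. It is a valid online algorithm: $p_t$ depends only on past diagonal payoffs and internal randomness, because $\rho_t$ does. The identities $\Tr(G_t\rho_t)=\sum_ip_t(i)g_{t,i}$ and $\Tr(G_t\Lambda_\phi(\rho_t))=\sum_ip_t(i)g_{t,\phi(i)}$, both computed in the proof of \cref{thm:lower}, give
\[
\CReg^{\rm EB}_T(\rho_{1:T};G_{1:T})=\max_{\phi}\sum_{t=1}^T\sum_{i}p_t(i)\bigl(g_{t,\phi(i)}-g_{t,i}\bigr),
\]
which is exactly the classical swap regret of $p_{1:T}$.

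\textbf{Step 3: invoke the classical lower bound.} Ito's theorem gives, for $T\ge c_0d\log d$, a distribution over (or a worst case among) gain sequences in $[0,1]^d$ on which every classical algorithm has expected swap regret at least $c_1\sqrt{dT\log d}$. Apply it to the induced learner and take the infimum over $\mathcal A$.

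The only point needing care is the form of Ito's statement. If it is stated for losses rather than gains, I will take $g_{t,i}=1-\ell_{t,i}$; this flips signs consistently and preserves swap regret. If it is stated for adaptive adversaries, I must check that the induced learner sees the same information as in the classical model. It does, since the quantum learner observes only $G_t$, which is equivalent to the gain vector $g_t$. With these checks, nothing beyond the argument of \cref{thm:lower} is needed.
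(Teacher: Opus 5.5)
Your reduction is correct and is essentially the paper's own proof. The paper simply notes that the proof of \cref{thm:lower} uses only diagonal effects and the maps $\Lambda_\phi$, and that these maps are entanglement-breaking. You make the same argument explicit: the induced learner $p_t(i)=\langle i|\rho_t|i\rangle$, the exact identification of the restricted regret with classical swap regret, the information and gain/loss checks, and then Ito's bound.

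One assertion in Step 1 is false, and you should drop it: the maps $\Lambda_\phi$ do not lie on a proper face of the CPTP body. Averaging the constant maps $\phi\equiv j$ over $j\in[d]$ gives the completely depolarizing channel $X\mapsto\Tr(X)\,I/d$. Its normalized Choi state $I/d^2$ is an interior point of $\mathcal C_d$. Any face containing all the $\Lambda_\phi$ therefore contains an interior point, so it is all of $\mathcal C_d$.

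For the same reason, the entanglement-breaking channels do not form a face either. The completely depolarizing channel is entanglement-breaking, yet it is a uniform mixture of Weyl--Heisenberg unitary channels, which are not. So ``face'' in the corollary can only be read informally, as ``strict subclass''. What the ``in particular'' clause actually needs is two facts. First, each $\Lambda_\phi$ is entanglement-breaking, which your Kraus and measure-and-prepare check already establishes. Second, the class is strictly smaller than CPTP; for instance, it contains no unitary channel, since every $\Lambda_\phi$ annihilates the off-diagonal units $\ket{i}\!\bra{k}$, $i\neq k$. Nothing else in your argument depends on the face claim.
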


\begin{proof}
The proof of \cref{thm:lower} uses only diagonal effects and only the channels displayed above.  These channels are entanglement-breaking because they first measure in the computational basis and then prepare \(\proj{\phi(i)}\) conditioned on the outcome.  Therefore the same classical swap-regret lower bound applies under the stated restrictions.
\end{proof}

Putting Proposition~\ref{prop:replacement-scale}, Proposition~\ref{prop:unital-zero}, and Corollary~\ref{cor:eb-lower} together gives a three-level picture.  Replacement deviations recover ordinary external regret, unital deviations are minimax-trivial, and entanglement-breaking measurement-and-preparation deviations already have the full coherent-swap rate.  This answers a natural question: the lower bound is not driven by coherent unitary effects.  In the minimax model, the difficulty is the physically allowed non-unital use of the recommendation register.

The lower bound is intentionally embedded in a classical subproblem, because that already proves worst-case optimality in the stated regime.  It remains open whether genuinely quantum structure, such as non-entanglement-breaking comparator channels or restricted noncommutative adversaries, gives sharper structure-dependent lower bounds.

The regime qualifier is also important.  Recent external-to-swap reductions show different behavior for fixed-accuracy, polylogarithmic-in-dimension round regimes \cite{DaganDaskalakisFishelsonGolowich2024}.  The matching statement here is only for $T\ge c d\log d$.  We do not claim that $\sqrt{dT\log d}$ is the right asymptotic law in every small-horizon or fixed-accuracy large-action regime.

\section{Learning separable quantum correlated equilibria}

There are $n$ players.  Player $i$ controls $H_i\simeq\C^{d_i}$.  A product play profile is
\[
  \rho_t=\rho_{1,t}\ot\cdots\ot\rho_{n,t}.
\]
Player $i$ has payoff effect $0\preceq U_i\preceq I$ on $H_1\ot\cdots\ot H_n$, and payoff $u_i(\rho)=\Tr(U_i\rho)$.

\begin{definition}[Channel-proof quantum correlated equilibrium]
A state $\omega$ on $H_1\ot\cdots\ot H_n$, possibly entangled, is an $\eps$-approximate channel-proof quantum correlated equilibrium if, for every player $i$ and every local CPTP map $\Lambda_i$,
\[
  \Tr(U_i\omega)
  \ge
  \Tr\!\left[U_i(\Lambda_i\ot\Id_{-i})(\omega)\right]-\eps.
\]
If $\omega$ is separable, we call it an $\eps$-approximate separable quantum correlated equilibrium.
\end{definition}

\begin{theorem}[Finite-time decentralized computation]
\label{thm:equilibrium}
Suppose every player runs coherent fixed-point Choi descent using its local payoff feedback.  Let
\[
  \bar\rho_T=\frac1T\sum_{t=1}^T\rho_{1,t}\ot\cdots\ot\rho_{n,t}.
\]
If $T\ge2d_i\log d_i$ for every $i$ and the fixed points are exact, then $\bar\rho_T$ is an $\eps_T$-approximate separable quantum correlated equilibrium with
\[
  \eps_T\le\max_i 2\sqrt{\frac{2d_i\log d_i}{T}}.
\]
If player $i$ instead uses approximate fixed points with errors $\delta_{i,t}$ in trace norm, the right-hand side becomes
\[
  \eps_T\le\max_i\left(2\sqrt{\frac{2d_i\log d_i}{T}}+\frac1T\sum_{t=1}^T\delta_{i,t}\right).
\]
Thus $T=O(\max_i d_i\log d_i/\eps^2)$ rounds plus average fixed-point error at most $O(\eps)$ suffices for an $\eps$-approximate equilibrium in the full-information model; a simple sufficient condition is $\delta_{i,t}\le O(\eps)$ for every $i,t$.
\end{theorem}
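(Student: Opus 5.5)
The plan is the standard reduction from swap regret to correlated equilibrium, run player by player. First I would fix a player $i$ and identify the payoff effect that player sees at round $t$. The local effect is the partial contraction against the opponents' states,
\[
  G_{i,t}=\Tr_{-i}\!\left[U_i\bigl(I_{H_i}\ot\textstyle\bigotimes_{j\ne i}\rho_{j,t}\bigr)\right].
\]
This operator satisfies $0\preceq G_{i,t}\preceq I$, because it is the image of the effect $U_i$ under a positive unital map (compression against a product state). Player $i$ receives $G_{i,t}$ as full-information feedback after choosing $\rho_{i,t}$. For any local CPTP map $\Lambda_i$,
\[
  \Tr\!\left[U_i(\Lambda_i\ot\Id_{-i})(\rho_{1,t}\ot\cdots\ot\rho_{n,t})\right]=\Tr\!\left[G_{i,t}\Lambda_i(\rho_{i,t})\right],
\]
and taking $\Lambda_i=\Id$ gives the realized payoff $\Tr(G_{i,t}\rho_{i,t})$. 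The sequence $G_{i,t}$ is adaptive, depending on the other players' past play, but \cref{thm:upper} holds for arbitrary, and hence adaptive, effect sequences, since its proof is pathwise.

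Next I would average over rounds using linearity of $\omega\mapsto\Tr[U_i(\Lambda_i\ot\Id_{-i})(\omega)]$. This gives, for every $\Lambda_i$,
\[
  \Tr\!\left[U_i(\Lambda_i\ot\Id_{-i})(\bar\rho_T)\right]-\Tr(U_i\bar\rho_T)
  =\frac1T\sum_{t=1}^T\Tr\!\left[G_{i,t}\bigl(\Lambda_i(\rho_{i,t})-\rho_{i,t}\bigr)\right]\le\frac{\CReg_{i,T}}{T}.
\]
When $T\ge 2d_i\log d_i$ and each player uses its own tuned $\eta_i=\sqrt{2\log d_i/(d_iT)}$, \cref{thm:upper} bounds $\CReg_{i,T}$ by $2\sqrt{2d_iT\log d_i}$. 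Dividing by $T$ and taking the maximum over $i$ gives the stated $\eps_T$. Separability of $\bar\rho_T$ is immediate, since it is a convex combination of product states.

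For the approximate fixed-point version I would follow the Remark on approximate fixed points. In the regret identity, $\ip{A_t}{J_t}=\Tr(G_{i,t}\Lambda_{i,t}(\rho_{i,t}))$ now differs from $\Tr(G_{i,t}\rho_{i,t})$ by at most $\delta_{i,t}$, because $0\preceq G_{i,t}\preceq I$. Therefore $\sum_t\Tr[G_{i,t}(\Lambda^\star(\rho_{i,t})-\rho_{i,t})]\le\sum_t\ip{A_t}{J^\star-J_t}+\sum_t\delta_{i,t}$, and the mirror bound is unchanged: \cref{lem:variance-collapse} and \cref{thm:second-order-md} never use the fixed-point property. Dividing by $T$ gives the extra term $T^{-1}\sum_t\delta_{i,t}$. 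The final round count follows by solving $2\sqrt{2d_i\log d_i/T}\le\eps/2$.

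The main point needing care is the first step: checking that $G_{i,t}$ is a valid effect, and that the local deviation acting on the product profile factors through $\Lambda_i(\rho_{i,t})$. For the first, I would write $\Tr(G_{i,t}X)=\Tr[U_i(X\ot\sigma_{-i})]$ with $\sigma_{-i}=\bigotimes_{j\ne i}\rho_{j,t}$. For $X\succeq0$ this lies between $0$ and $\Tr X$, which gives $0\preceq G_{i,t}\preceq I$. The factoring is then immediate, because $(\Lambda_i\ot\Id)(\rho_{i,t}\ot\sigma_{-i})=\Lambda_i(\rho_{i,t})\ot\sigma_{-i}$. Everything else is the direct averaging above.
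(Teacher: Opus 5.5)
Your proposal is correct and follows essentially the same route as the paper: define the induced local effect $G_{i,t}$, apply \cref{thm:upper} to each player, average over rounds using linearity of $\omega\mapsto\Tr[U_i(\Lambda_i\ot\Id_{-i})(\omega)]$, note that $\bar\rho_T$ is separable, and add $\sum_t\delta_{i,t}$ for approximate fixed points via the remark. You also make explicit two points the paper leaves implicit: that $0\preceq G_{i,t}\preceq I$ because it is the image of $U_i$ under the adjoint of $X\mapsto X\ot\sigma_{-i}$, and that \cref{thm:upper} is pathwise and therefore tolerates adaptively chosen opponent play.
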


\begin{proof}
Fix player $i$.  At round $t$, define $G_{i,t}$ by
\[
  \Tr(G_{i,t}\sigma_i)=
  \Tr\!\left[U_i(\sigma_i\ot\rho_{-i,t})\right]
  \qquad\forall\sigma_i\in\D(H_i).
\]
Since $0\preceq U_i\preceq I$, we have $0\preceq G_{i,t}\preceq I$.  Applying \cref{thm:upper} to player $i$, for every local CPTP map $\Lambda_i$,
\[
  \sum_{t=1}^T\Tr(G_{i,t}\rho_{i,t})
  \ge
  \sum_{t=1}^T\Tr(G_{i,t}\Lambda_i(\rho_{i,t}))
  -2\sqrt{2d_i T\log d_i}.
\]
Equivalently,
\[
  \sum_{t=1}^T u_i(\rho_t)
  \ge
  \sum_{t=1}^T\Tr\!\left[U_i(\Lambda_i\ot\Id_{-i})(\rho_t)\right]
  -2\sqrt{2d_i T\log d_i}.
\]
Divide by $T$ and use linearity to obtain
\[
  \Tr(U_i\bar\rho_T)
  \ge
  \Tr\!\left[U_i(\Lambda_i\ot\Id_{-i})(\bar\rho_T)\right]
  -2\sqrt{\frac{2d_i\log d_i}{T}}.
\]
This holds for all $i$ and all $\Lambda_i$.  The state $\bar\rho_T$ is separable because it is an average of product states.  If player $i$ uses approximate fixed points, the approximate-fixed-point remark adds $\sum_t\delta_{i,t}$ to player $i$'s cumulative coherent swap regret.  Dividing by $T$ gives the displayed additional average error.
\end{proof}

\begin{remark}[Tightness of the equilibrium-rate reduction]
The rate in Theorem~\ref{thm:equilibrium} is tight for this no-regret route to equilibrium.  The diagonal subgames used in Theorem~\ref{thm:lower} are valid quantum games, and an opponent's current state can induce the diagonal local payoff effects used in the embedded classical swap-regret problem.  Thus any analysis that reaches separable quantum correlated equilibrium only by bounding each player's coherent swap regret cannot improve the dependence $\sqrt{d_i\log d_i/T}$ in the moderate-horizon regime.  This is not a lower bound on arbitrary offline computation of an equilibrium from an explicit payoff tensor; that would be a different question.
\end{remark}

\section{Application: channel-proof quantum recommendation protocols}
\label{sec:application}

This section spells out what \cref{thm:equilibrium} computes.  In a mediated quantum recommendation protocol, each player receives a private register.  The protocol is credible only if a player cannot gain by applying a local quantum operation to that register before the payoff interaction.

\begin{definition}[Channel-proof recommendation]
A state $\omega$ is an $\eps$-channel-proof recommendation protocol for payoff effects $U_i$ if, for every player $i$ and every local CPTP map $\Lambda_i$,
\[
  \Tr(U_i\omega)
  \ge
  \Tr\!\left[U_i(\Lambda_i\ot\Id_{-i})(\omega)\right]-\eps.
\]
\end{definition}

For separable $\omega$, this definition matches the separable quantum correlated-equilibrium inequalities in the previous section.  Channel-proofness is just the operational reading of those inequalities when the player can preprocess its private quantum register.  The audit definition also applies to arbitrary finite-dimensional recommendation states, including entangled candidates.

\begin{proposition}[SDP audit of CPTP exploitability]
\label{prop:sdp-audit}
Fix a candidate finite-dimensional recommendation state $\omega$, not necessarily separable.  For each player $i$, there is an explicitly computable Hermitian operator $B_i(\omega)$ on $H_i^{\rm out}\ot H_i^{\rm in}$ such that
\[
  \Tr\!\left[U_i(\Lambda_i\ot\Id_{-i})(\omega)\right]
  =\ip{B_i(\omega)}{J_{\Lambda_i}}
\]
for every local channel $\Lambda_i$.  Hence the local CPTP exploitability of $\omega$ is exactly
\[
  \Exploit_i(\omega)=
  \max_{J\succeq0,\,\Tr_{\rm out}J=I/d_i}
  \ip{B_i(\omega)}{J}-\Tr(U_i\omega),
\]
a semidefinite program over the local Choi body.
\end{proposition}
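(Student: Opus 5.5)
The plan is to write the local channel action in normalized Choi coordinates, exactly as in \eqref{eq:choi-payoff}, but now with the environment of the other players kept as a spectator. Fix player $i$, write $d=d_i$, and expand $\omega$ in the computational basis of $H_i$ as
\[
  \omega=\sum_{j,k}\ket{j}\!\bra{k}\ot\omega_{jk},
\]
with operators $\omega_{jk}$ acting on $H_{-i}$. For any linear map $\Lambda_i$ with normalized Choi operator $J$, the inversion formula gives
\[
  \Lambda_i(\ket{j}\!\bra{k})=d\,\Tr_{\rm in}\bigl[(I\ot\ket{k}\!\bra{j})J\bigr].
\]
Substituting this into $(\Lambda_i\ot\Id_{-i})(\omega)$ and pairing with $U_i$ should produce a formula that is linear in $J$. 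Collecting terms, I would define
\[
  B_i(\omega)=d\,\Tr_{-i}\!\Bigl[(U_i\ot I_{\rm in})\,\bigl(I_{\rm out}\ot\omega^{\mathsf T_i}\bigr)\Bigr],
\]
with the tensor factors ordered as $H_i^{\rm out}\ot H_i^{\rm in}\ot H_{-i}$. Here $U_i$ acts on $H_i^{\rm out}\ot H_{-i}$, $\omega^{\mathsf T_i}$ is the partial transpose on $H_i$ regarded as acting on $H_i^{\rm in}\ot H_{-i}$, and the trace over $H_{-i}$ is taken last. As a sanity check, when $\omega=\rho\ot\sigma$ this collapses to $d(G\ot\rho^{\mathsf T})$ with $G=\Tr_{-i}[U_i(I\ot\sigma)]$, recovering \eqref{eq:choi-payoff}.

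The key steps, in order, are as follows.
\begin{enumerate}[leftmargin=2em]
\item Verify the identity $\Tr[U_i(\Lambda_i\ot\Id_{-i})(\omega)]=\ip{B_i(\omega)}{J_{\Lambda_i}}$ on elementary tensors $\omega=X\ot Y$. This uses \eqref{eq:choi-payoff} with $G=\Tr_{-i}[U_i(I\ot Y)]$, applied to general (not necessarily Hermitian) $X$, since the Choi formula is linear in its argument.
\item Extend to all $\omega$ by linearity, because every operator is a finite sum of elementary tensors and both sides are linear in $\omega$.
\item Check that $B_i(\omega)$ is Hermitian. The cleanest route is to note that $\ip{B_i(\omega)}{J}$ is real for every Hermitian $J$: Hermiticity-preserving maps correspond exactly to Hermitian Choi operators, so the payoff $\Tr[U_i(\Lambda\ot\Id)(\omega)]$ is real for these. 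Alternatively, one can take the Hermitian part of $B_i(\omega)$ without changing the pairing on Hermitian $J$.
\item Conclude the exploitability formula. By the CPTP characterization of the Choi body $\mathcal C_{d_i}$, the supremum over local channels $\Lambda_i$ equals the maximum of the linear functional $\ip{B_i(\omega)}{J}$ over $\{J\succeq0,\ \Tr_{\rm out}J=I/d_i\}$, which is a semidefinite program. The maximum is attained because $\mathcal C_{d_i}$ is compact. Subtracting the constant $\Tr(U_i\omega)$ then gives $\Exploit_i(\omega)$.
\end{enumerate}

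I expect the main obstacle to be purely bookkeeping. One has to keep track of which factor carries the transpose, namely the input copy of $H_i$ paired with $\omega$, and how $H_{-i}$ is contracted, so that the formula for $B_i(\omega)$ is stated correctly and is basis-consistent with $\ket\Omega$. Step 1 on elementary tensors isolates this cleanly, and the rest is linearity.
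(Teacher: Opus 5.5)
Your proposal is correct and follows the same route as the paper. Both write the local channel in normalized Choi form, observe that the deviation payoff is linear in $J_{\Lambda_i}$, take its Hilbert--Schmidt representative, and maximize over the Choi body. You go slightly further than the paper in two places: you give the closed form $B_i(\omega)=d\,\Tr_{-i}[(U_i\ot I_{\rm in})(I_{\rm out}\ot\omega^{\mathsf T_i})]$, where the paper only refers to a matrix-unit expansion, and you justify Hermiticity through the realness of the pairing on Hermitian $J$, a point the paper leaves implicit in its appeal to duality.
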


\begin{proof}
Fix $i$ and $\omega$.  By the normalized Choi representation,
\[
  \Lambda_i(X)=d_i\Tr_{\rm in}\!\left[(I_{\rm out}\ot X^{\mathsf T})J_{\Lambda_i}\right].
\]
Consequently the map
\[
  J_{\Lambda_i}\mapsto
  \Tr\!\left[U_i(\Lambda_i\ot\Id_{-i})(\omega)\right]
\]
is linear in $J_{\Lambda_i}$.  Finite-dimensional Hilbert-Schmidt duality gives the Hermitian representative $B_i(\omega)$; it can be computed explicitly by expanding $\omega$ and $U_i$ in any matrix-unit basis and collecting the coefficients multiplying $J_{\Lambda_i}$.  Maximizing over all CPTP maps is therefore the linear optimization problem over the spectrahedron
\[
  J\succeq0,
  \qquad
  \Tr_{\rm out}J=I/d_i.
\]
Subtracting the baseline payoff gives the displayed exploitability value.
\end{proof}

The SDP is ``tractable'' only in the finite-dimensional convex-optimization sense.  The variable is a $d_i^2\times d_i^2$ Choi matrix, so the audit is polynomial-time in the explicit matrix dimension and input bit complexity, but it may still be expensive for large local Hilbert spaces.  Its dual is often the most useful certificate:
\begin{equation}
  \tag{D}
  \begin{aligned}
  \text{minimize}\quad& \frac1{d_i}\Tr Y\\
  \text{subject to}\quad& I_{\rm out}\ot Y\succeq B_i(\omega),\\
  &Y=Y^*.
  \end{aligned}
  \label{eq:sdp-dual}
\end{equation}
Strong duality holds because the local Choi body has an interior feasible point, for example $J=I/d_i^2$.

\subsection{A closed-form qubit audit}

The SDP audit is not only a formal certificate.  Even in dimension two it detects coherent preprocessing gains that replacement-state tests underestimate.

Let $H_A=H_B=\C^2$, let $\ket{+}=(\ket0+\ket1)/\sqrt2$, and consider the two-register recommendation
\[
  \omega=\frac12\proj{0}_A\ot\proj{0}_B+\frac12\proj{+}_A\ot\proj{1}_B.
\]
Player $A$'s payoff effect is
\[
  U=\proj{+}_A\ot\proj{0}_B+\proj{1}_A\ot\proj{1}_B.
\]
Thus the second register is a classical context: when $B=0$, $A$ is rewarded for becoming $\ket+$; when $B=1$, $A$ is rewarded for becoming $\ket1$.

\begin{proposition}[Qubit audit: coherent preprocessing beats replacement]
\label{prop:qubit-audit}
For the recommendation state $\omega$ and payoff effect $U$ above,
\[
  \Tr(U\omega)=\frac12.
\]
The optimal local CPTP deviation for player $A$ attains payoff $1$, so the CPTP exploitability is exactly $1/2$.  By contrast, the best replacement channel $\rho\mapsto\sigma$ attains payoff
\[
  \frac12\left(1+\frac1{\sqrt2}\right),
\]
so replacement-channel exploitability is only $1/(2\sqrt2)$.
\end{proposition}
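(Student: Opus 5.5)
Because $\omega$ is classical on register $B$, I will split it into its two branches. The state is $\omega=\tfrac12\proj{0}\ot\proj{0}+\tfrac12\proj{+}\ot\proj{1}$, and the payoff effect $U$ is block diagonal in the $B$ basis. So for any local channel $\Lambda$ on $A$,
\[
  \Tr\!\left[U(\Lambda\ot\Id)(\omega)\right]
  =\tfrac12\bra{+}\Lambda(\proj0)\ket{+}+\tfrac12\bra{1}\Lambda(\proj{+})\ket{1}.
\]
For the baseline $\Lambda=\Id$ this gives $\tfrac12\cdot\tfrac12+\tfrac12\cdot\tfrac12=\tfrac12$, which is the claimed value of $\Tr(U\omega)$.

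For the CPTP value, payoff $1$ is an upper bound because each branch term is at most $1$. To attain it, I need a channel sending $\ket0\mapsto\ket+$ and $\ket+\mapsto\ket1$. Both input states are pure, so I will take a unitary $V$ with $V\ket0=\ket+$ and $V\ket+=\ket1$. Such a unitary exists exactly when the overlaps agree, i.e. $|\langle0|+\rangle|=|\langle+|1\rangle|$, and both equal $1/\sqrt2$. Concretely, I will solve for $V$ from $V\ket0=\ket+$ and $V\ket1=\sqrt2\ket1-\ket+=(-\ket0+\ket1)/\sqrt2=\ket{-}$ up to sign. This gives $V=\proj{+}{}$-type rotation, namely $V=\ket{+}\!\bra0+\ket{-}\!\bra1$, which is unitary because $\{\ket+,\ket-\}$ is orthonormal. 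I then check $V\ket+=(\ket++\ket-)/\sqrt2=\ket0$. That is wrong, so I will adjust the phase by taking $V\ket1=-\ket-$. Then $V\ket+=(\ket+-\ket-)/\sqrt2=\ket1$, as required. This exhibits payoff $1$, so the CPTP exploitability is exactly $1/2$.

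For replacement channels $\Lambda_\sigma(\rho)=\sigma$, the payoff becomes $\tfrac12\Tr\!\left[\sigma(\proj{+}+\proj{1})\right]$. Maximizing over $\sigma$ gives $\tfrac12\lambda_{\max}(\proj++\proj1)$. For two rank-one projectors with overlap $|\langle+|1\rangle|^2=\tfrac12$, the eigenvalues are $1\pm|\langle+|1\rangle|=1\pm1/\sqrt2$. So the best replacement payoff is $\tfrac12(1+1/\sqrt2)$, and subtracting $\tfrac12$ gives replacement exploitability $1/(2\sqrt2)$.

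The main step needing care is constructing the unitary and fixing its phase. It is a small computation, but it is the only point where the attainment of payoff $1$ is not automatic. An alternative route to the same conclusion is to exhibit a feasible Choi matrix in the SDP of Proposition~\ref{prop:sdp-audit}.
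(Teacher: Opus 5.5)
Your proof is correct and follows the paper's argument. The branch decomposition gives the baseline $1/2$. The unitary you end with, $V=\ket{+}\!\bra{0}-\ket{-}\!\bra{1}$, is exactly the paper's $\frac{1}{\sqrt2}\begin{pmatrix}1&-1\\1&1\end{pmatrix}$, and the replacement value $\tfrac12\lambda_{\max}(\proj{+}+\proj{1})$ is the same computation. One clean-up: $V\ket{1}$ is forced by linearity, since $\ket{1}=\sqrt2\ket{+}-\ket{0}$ gives $V\ket{1}=(\ket{1}-\ket{0})/\sqrt2=-\ket{-}$, so the sign is not a free phase choice and your first guess $+\ket{-}$ was simply a slip.
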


\begin{proof}
The baseline is
\[
  \Tr(U\omega)=\frac12\left(|\langle +|0\rangle|^2+|\langle 1|+\rangle|^2\right)=\frac12.
\]
Let
\[
  V=\frac1{\sqrt2}\begin{pmatrix}1&-1\\[2pt]1&1\end{pmatrix}.
\]
Then $V\ket0=\ket+$ and $V\ket+=\ket1$.  The unitary channel $\Lambda_V(X)=VXV^\dagger$ therefore maps both recommended states to their context-dependent rewarded states, giving
\[
  \Tr\!\bigl[U(\Lambda_V\ot\Id)(\omega)\bigr]=1.
\]
Since $0\preceq U\preceq I$, no channel can obtain payoff larger than $1$, so the SDP optimum is exactly $1$ and the CPTP exploitability is $1/2$.

For a replacement channel $\Lambda_\sigma(\rho)=\sigma$, the payoff is
\[
  \frac12\Tr\!\left[(\proj{+}+\proj{1})\sigma\right].
\]
The largest eigenvalue of $\proj{+}+\proj{1}$ is $1+|\langle +|1\rangle|=1+1/\sqrt2$.  Optimizing over density matrices $\sigma$ gives the displayed replacement payoff and exploitability.  Hence the coherent audit detects a strictly larger deviation incentive than any replacement-state audit.
\end{proof}

This example shows the issue in the smallest possible setting.  The problem is not entanglement or large dimension.  A recommendation can contain local state information that a rational player may process before acting.  Replacement tests ask only whether the player should discard the register.  Channel-proofness asks whether the player can profit by using it.

The same audit definition applies to entangled recommendation states.  If a mediator distributes an arbitrary state $\omega$ rather than a separable time average, channel-proofness still means nonpositivity of every local CPTP exploitability SDP\@.  What changes is the synthesis theorem: uncoupled product play can only produce separable averages, so \cref{thm:equilibrium} is a decentralized learning theorem for the separable part of the recommendation set, while Proposition~\ref{prop:sdp-audit} is an audit theorem for any fixed finite-dimensional recommendation state.

This is the role of coherent swap regret in the recommendation problem.  It gives dynamics that synthesize channel-proof recommendations, and it also gives a way to audit a proposed recommendation state.  External-regret dynamics certify only replacement-channel stability.  The next section shows that replacement stability can fail the channel-proof test by a constant.

\section{Why external regret is insufficient}

\subsection{Coarse stability can fail channel-proofness}

Quantum coarse correlated equilibrium tests replacement channels.  Quantum correlated equilibrium tests all local CPTP maps.  The following diagonal example shows that the difference is operationally large.

Consider the zero-sum rock-paper-scissors payoff matrix for player 1.  This subsection uses the conventional $[-1,1]$ payoff normalization; shifting and rescaling to effects in $[0,I]$ only rescales the displayed exploitability gap.
\[
  M=
  \begin{pmatrix}
  0&-1&1\\
  1&0&-1\\
  -1&1&0
  \end{pmatrix},
\]
with player 2 payoff $-M$.  Embed the game diagonally on $\C^3\ot\C^3$:
\[
  U_1=\sum_{a,b\in\{R,P,S\}}M_{ab}\proj{a}\ot\proj{b},
  \qquad U_2=-U_1.
\]
Let
\[
  \omega=\frac13\sum_{a\in\{R,P,S\}}\proj{a}\ot\proj{a}.
\]

\begin{proposition}[Coarse stability does not imply coherent stability]
\label{prop:rps-separation}
The state $\omega$ is a separable quantum coarse correlated equilibrium, but player 1 has a local CPTP deviation that improves its payoff by exactly $1$.
\end{proposition}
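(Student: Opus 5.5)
The plan has three parts: show that $\omega$ is separable, verify the replacement (coarse) inequalities for both players, and exhibit one local channel that gives player 1 a gain of exactly $1$.

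\emph{Separability.} $\omega$ is an equal mixture of the product states $\proj{a}\ot\proj{a}$, so it is separable.

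\emph{Baseline payoffs.} The diagonal of $M$ is zero, so $\Tr(U_1\omega)=\frac13\sum_a M_{aa}=0$, and likewise $\Tr(U_2\omega)=0$.

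\emph{Coarse stability.} For player 1, a replacement channel $\Lambda_\sigma$ sends $\omega$ to $\sigma\ot\omega_B$, where $\omega_B=\Tr_A\omega=I/3$ is the uniform marginal. The resulting payoff is
\[
  \Tr[U_1(\sigma\ot I/3)]=\tfrac13\sum_a\langle a|\sigma|a\rangle\sum_b M_{ab}.
\]
Each row of $M$ sums to zero, so this payoff is $0$ for every $\sigma$, and player 1 gains nothing. For player 2 the argument is symmetric: $\Tr_B\omega=I/3$, and each column of $-M$ sums to zero. Hence every replacement deviation yields exactly the baseline payoff $0$, and $\omega$ is a (exact) separable quantum coarse correlated equilibrium.

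\emph{Profitable CPTP deviation.} The deviation comes from the classical best-response rule
\[
  \phi(R)=P,\qquad \phi(P)=S,\qquad \phi(S)=R,
\]
which satisfies $M_{\phi(a),a}=1$ for every $a$. It is implemented by the measure-and-prepare channel $\Lambda_\phi$ used in \cref{thm:lower}; equivalently, by the permutation unitary that realizes $\phi$. Then
\[
  (\Lambda_\phi\ot\Id)(\omega)=\tfrac13\sum_a\proj{\phi(a)}\ot\proj{a},
\]
and player 1's payoff becomes $\frac13\sum_a M_{\phi(a),a}=1$. The improvement is therefore $1-0=1$.

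\emph{Exactness of the gain.} "Exactly $1$" is read as the gain achieved by this particular channel. If the claim is instead taken to mean that $1$ is the maximal gain, the bound follows because every entry of $M$ is at most $1$: $U_1\preceq I$ since $U_1$ is diagonal with entries in $\{-1,0,1\}$, so any state gives player 1 payoff at most $1$. Thus no CPTP deviation can gain more than $1$, and the exploitability is exactly $1$.

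There is no real obstacle in this argument. The only point needing care is computing the marginals correctly in the replacement step: a replacement channel acts on player 1's register only, so the opponent's marginal is unchanged and stays uniform.
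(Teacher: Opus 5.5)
Your proof is correct and follows the paper's argument: a zero baseline because every realized pair is a tie, replacement deviations neutralized by the uniform opponent marginal (your row- and column-sum computation makes explicit the paper's remark that every fixed mixed action earns zero against the uniform action), and the cyclic winning-response channel $\Lambda_\phi$ raising player 1's payoff to $1$. Your added observation that $U_1\preceq I$ caps every deviation payoff at $1$ supplies the upper bound that the paper's claim of exploitability ``exactly $1$'' leaves implicit.
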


\begin{proof}
The payoff at $\omega$ is zero because every realized pair is a tie.  If either player uses a replacement channel, its new local state is independent of the other player's register.  The other player's marginal under $\omega$ is uniform over rock, paper, scissors.  Against the uniform action, every fixed mixed action has expected payoff zero.  Thus no replacement deviation improves payoff, so $\omega$ is a separable quantum coarse correlated equilibrium.

Let $\phi$ be the winning-response map
\[
  \phi(R)=P,
  \qquad
  \phi(P)=S,
  \qquad
  \phi(S)=R.
\]
The channel
\[
  \Lambda_\phi(X)=\sum_{a\in\{R,P,S\}}\langle a|X|a\rangle\proj{\phi(a)}
\]
is CPTP\@.  Applying it to player 1's register transforms every tied pair $(a,a)$ into a winning pair $(\phi(a),a)$, so
\[
  \Tr\!\left[U_1(\Lambda_\phi\ot\Id)(\omega)\right]=1,
  \qquad
  \Tr(U_1\omega)=0.
\]
Hence the local CPTP exploitability is exactly $1$ for player 1.
\end{proof}

\subsection{Zero external regret and linear coherent swap regret}

The static example separates equilibrium notions.  The next example separates the online regret notions directly.

\begin{theorem}[Qubit cyclic separation]
\label{thm:qubit-separation}
For every $T$ divisible by $4$, there exist qubit states $\rho_t$ and observables $H_t$ with $\norm{H_t}_\infty\le1$ such that
\[
  \sup_{\sigma\in\D(\C^2)}
  \sum_{t=1}^T\Tr[H_t(\sigma-\rho_t)]=0,
\]
but
\[
  \sup_{\Lambda\in\CPTP(2)}
  \sum_{t=1}^T\Tr[H_t(\Lambda(\rho_t)-\rho_t)]=T.
\]
\end{theorem}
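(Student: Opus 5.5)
The plan is an explicit construction that cycles through four pure qubit states lying on a great circle of the Bloch sphere. The payoff observables are chosen so that three things hold at once: the played state earns exactly $0$ each round, the observables cancel over every block of four rounds, and a single rotation channel earns exactly $1$ each round. I will take
\[
  \psi_1=\ket0,\quad \psi_2=\ket+,\quad \psi_3=\ket1,\quad \psi_4=\ket-,
\]
repeated periodically, and set $\rho_t=\proj{\psi_{t \bmod 4}}$. Let $V$ be the same real rotation used in Proposition~\ref{prop:qubit-audit}. It maps $\ket0\mapsto\ket+\mapsto\ket1\mapsto\ket-\mapsto\ket0$ up to phases, i.e.\ it is a $90^\circ$ rotation about the $y$-axis of the Bloch sphere. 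Choose
\[
  H_t=\vec n_t\cdot\vec\sigma,
\]
where $\vec n_t$ is the Bloch vector of $V\psi_t$. Equivalently, $H_t=\proj{V\psi_t}-\proj{(V\psi_t)^\perp}$. Each $H_t$ satisfies $\norm{H_t}_\infty=1$.

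First I check the played payoffs. The Bloch vector of $\psi_t$ is orthogonal to $\vec n_t$, because $V$ rotates by $90^\circ$ within the $xz$-plane. Hence $\Tr(H_t\rho_t)=0$ for every $t$.

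Next, the external-regret claim. Over one block of four rounds the vectors $\vec n_t$ are $+\hat x,-\hat z,-\hat x,+\hat z$, which sum to zero, so $\sum_{t=1}^T H_t=0$ whenever $4\mid T$. Therefore, for every $\sigma$,
\[
  \sum_t\Tr[H_t(\sigma-\rho_t)]=\Tr\Bigl[\Bigl(\sum_t H_t\Bigr)\sigma\Bigr]-0=0,
\]
and the supremum over $\sigma$ is exactly $0$.

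For the coherent claim I use the unitary channel $\Lambda_V(X)=VXV^\dagger$. It maps each $\rho_t$ to $\proj{V\psi_t}$, which is the $+1$ eigenprojector of $H_t$, so every summand equals $1-0=1$ and the total is at least $T$. For the matching upper bound, any CPTP $\Lambda$ gives $\Tr[H_t\Lambda(\rho_t)]\le\norm{H_t}_\infty=1$ while $\Tr(H_t\rho_t)=0$. So each summand is at most $1$, and the supremum equals $T$ exactly.

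The only step needing care is the bookkeeping of the Bloch vectors. I have to confirm that the chosen $V$ really sends $\ket1\mapsto\ket-$ and $\ket-\mapsto\ket0$ up to phase; a rotation in the $xz$-plane by $\pi/2$ does this. I also have to confirm that the four $\vec n_t$ cancel over each block, which they do since they are the Bloch vectors of $\ket+,\ket1,\ket-,\ket0$ in order. Phases are irrelevant because only the projectors enter the argument.
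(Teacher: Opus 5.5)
Your construction is correct and is the paper's argument up to a change of great circle: the paper cycles through four equatorial states $v_k=(\cos(k\pi/2),\sin(k\pi/2),0)$ with $H_t=v_{t+1}\cdot\sigma$ and deviates by the $\pi/2$ rotation about the $z$-axis, whereas you use the $xz$-circle and the rotation $V$ about the $y$-axis. Unlike the paper, you also state explicitly the per-round bound $\Tr[H_t\Lambda(\rho_t)]\le 1$ that makes the supremum exactly $T$; the only slip is your indexing $\psi_{t \bmod 4}$, which should be shifted so the index lands in $\{1,2,3,4\}$.
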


\begin{proof}
Let $\sigma_x,\sigma_y,\sigma_z$ be the Pauli matrices and set
\[
  v_k=(\cos(k\pi/2),\sin(k\pi/2),0),\qquad k=0,1,2,3.
\]
Let $\rho_k=(I+v_k\cdot\sigma)/2$ and cycle through $\rho_t=\rho_{t\bmod4}$.  Define
\[
  H_t=v_{t+1\bmod4}\cdot\sigma.
\]
Then $\Tr(H_t\rho_t)=v_{t+1}\cdot v_t=0$ each round.  For any fixed comparator state $\sigma_r=(I+r\cdot\sigma)/2$ with $\norm r_2\le1$, the payoff over a full four-cycle is
\[
  \sum_{k=0}^3v_{k+1}\cdot r=0.
\]
Thus external regret is zero.

Let $U$ be the unitary rotating the Bloch sphere by $\pi/2$ around the $z$-axis, and set $\Lambda(\rho)=U\rho U^\dagger$.  Then $\Lambda(\rho_k)=\rho_{k+1}$ and
\[
  \Tr(H_t\Lambda(\rho_t))=1
\]
for every $t$.  The coherent deviation earns one unit per round while the original sequence earns zero.
\end{proof}

\section{Computational remarks}

The regret and equilibrium guarantees are finite-time information-theoretic guarantees in an explicit convex-optimization model.  They do not say that each update is as cheap as a classical softmax, and they should not be read as a polylogarithmic-time quantum algorithm.  A round consists of solving a fixed-point feasibility problem and a relative-entropy convex program over the CPTP Choi body.

The fixed-point step is the feasibility problem
\[
  \rho\succeq0,
  \qquad
  \Tr\rho=1,
  \qquad
  \Lambda_t(\rho)=\rho.
\]
It is always feasible.  Approximate fixed points add the error described earlier.

The mirror step is a convex optimization problem over the spectrahedron $\mathcal C_d$ with a relative-entropy objective.  Its KKT conditions have the form
\[
  \log J_{t+1}=
  \log J_t+
  \eta A_t-I_{\rm out}\ot B_t-
  \gamma_t I,
\]
where $B_t$ enforces $\Tr_{\rm out}J_{t+1}=I/d$ and $\gamma_t$ enforces trace.  This is a noncommutative matrix-scaling problem.  In diagonal classical subalgebras it reduces to exponential weights for classical swap-regret reductions.  In the fully noncommutative case, efficient implementation depends on convex-optimization or matrix-scaling primitives; the theorem should be read in an oracle or finite-dimensional convex-optimization model.

The exploitability audit in Proposition~\ref{prop:sdp-audit}, by contrast, is a standard linear SDP because the candidate state $\omega$ is fixed and the deviation payoff is linear in the Choi state.  This is polynomial-time only in the explicit Choi dimension; for large $d_i$ it can still be computationally expensive.

\section{Scope and future directions}

The main equilibrium theorem is full-information: each player observes the local payoff effect induced by the other players' current states.  Appendix~\ref{app:probing-bandit} gives a first partial-feedback extension in a probing model, where the learner sometimes plays Haar-random pure states and receives one Bernoulli payoff sample.  The resulting guarantee is pseudo-regret, not the stronger pathwise quantity $\mathbb E\sup_\Lambda\sum_t\cdots$.  This distinction matters because coherent swap regret benchmarks transformations of the states actually played, while probing breaks the fixed-point identity on exploration rounds.  Payoff-based matrix multiplicative weights is known for external-regret quantum games under scalar feedback \cite{LotidisMertikopoulosBambosBlanchet2023}; a fully pathwise bandit coherent-swap theorem is still open.

The learned equilibrium is separable because uncoupled product play followed by time averaging produces a separable state.  The channel-proofness definition and SDP audit apply to entangled recommendation states as well, but learning such states requires a mediator, shared entanglement, communication, or another model that can actually generate entangled recommendations.

The present theory is for state-valued actions.  A natural extension is coherent internal regret for quantum decision kernels.  If a player's action is itself a channel $\Phi:X\to Y$, the appropriate deviations are superchannels mapping channels to channels.  The analogue of this paper would learn a superchannel, play a fixed channel of it, and compete with all superchannel rewrites of the realized policy.

Finally, the lower bound comes from a diagonal classical subgame.  This proves the optimal worst-case rate in the stated moderate-horizon regime, and the purity-sensitive upper bound shows how quantum structure can improve the upper bound on easier sequences.  What remains open is a genuinely noncommutative lower-bound theory for restricted families: low-rank or high-entropy fixed points, Pauli-diagonal games, bounded-Kraus deviations, non-entanglement-breaking comparator families, and superchannel deviations.

\section{Conclusion}

Coherent swap regret is the quantum internal-regret benchmark against local physical rewrites of a learner's realized states.  This paper gives the optimal worst-case rate in the oracle or finite-dimensional convex-optimization model, a purity-sensitive refinement, and the variance-collapse estimate that makes the Choi-body mirror analysis sharp.  It also shows that the resulting dynamics compute channel-proof separable quantum recommendations.  The audit examples show that replacement stability can miss profitable physical preprocessing even in small systems, and the online separations show that external regret can miss coherent deviations with linear gain.  For decentralized quantum strategic learning, the relevant stability test is therefore regret against local quantum operations.

\clearpage
\appendix
\leftappendixsections

\section{Payoff normalization}

For Hermitian payoffs $H_t$ with $\norm{H_t}_\infty\le L$, set
\[
  G_t=\frac{H_t+LI}{2L}.
\]
Then $0\preceq G_t\preceq I$ and
\[
  \Tr[G_t(\Lambda(\rho_t)-\rho_t)]
  =
  \frac{1}{2L}\Tr[H_t(\Lambda(\rho_t)-\rho_t)]
\]
for every CPTP map $\Lambda$.  Therefore effect-regret bounds imply norm-bounded payoff-regret bounds by multiplying by $2L$.

\section{Diagonal embedding of classical swap regret}

The lower bound uses the following exact embedding.  Diagonal density matrices are classical mixed actions.  Diagonal payoff effects are classical payoff vectors.  Classical modification rules $\phi:[d]\to[d]$ are the entanglement-breaking CPTP maps
\[
  \Lambda_\phi(\rho)=\sum_i\langle i|\rho|i\rangle\proj{\phi(i)}.
\]
Thus the coherent deviation payoff equals the classical swapped payoff.  Since an adversary may restrict to diagonal effects, every coherent-swap learner induces a classical swap-regret learner on this subproblem.

\section{Unbiased effect-estimate feedback}
\label{app:unbiased-feedback}

The main theorem is full-information because it assumes the learner observes the payoff effect $G_t$.  This appendix records the pseudo-regret statement that remains true when $G_t$ is replaced by an unbiased Hermitian estimate.  This is not a complete physical bandit theorem.  A one-shot measurement model must still supply estimators with the stated Choi-variance control, and a bound on the expected pathwise quantity $\mathbb E\sup_\Lambda \Reg_T(\Lambda)$ would require an additional uniform martingale argument.  The goal is to state the mathematical interface between coherent swap regret and partial-feedback implementations without overstating it.

\begin{lemma}[Signed second-order mirror inequality]
\label{lem:signed-second-order}
Let $\mathcal K\subseteq\D(\C^m)$ be compact and convex.  Let $P_1\in\mathcal K$ be positive definite and define
\[
  P_{t+1}=\arg\max_{P\in\mathcal K}\{\eta\ip{A_t}{P}-D(P\|P_t)\}.
\]
If $A_t=A_t^*$ and $\eta\norm{A_t}_\infty\le1$ for every $t$, then, for every $U\in\mathcal K$,
\[
  \sum_{t=1}^T\ip{A_t}{U-P_t}
  \le
  \frac{D(U\|P_1)}{\eta}
  +\eta\sum_{t=1}^T\ip{A_t^2}{P_t}.
\]
\end{lemma}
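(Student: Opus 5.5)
The plan is to rerun the proof of \cref{thm:second-order-md} verbatim and check that positivity of $A_t$ is used in only two places. Both turn out to survive for Hermitian $A_t$ with $\eta\norm{A_t}_\infty\le1$.

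\emph{First-order part.} Lemma~\ref{lem:full-rank-iterates} only needs $A_t$ bounded, so every iterate $P_t$ is positive definite. The first-order variational inequality for the concave objective $\Phi(Q)=\eta\ip{A}{Q}-D(Q\|P)$ at its maximizer $P^+$ makes no use of a sign on $A$. The same holds for the three-point identity
\[
  \ip{\log P^+-\log P}{U-P^+}=D(U\|P)-D(U\|P^+)-D(P^+\|P).
\]
So \eqref{eq:threepoint-gap} holds unchanged. The Gibbs variational principle with $H=\log P+\eta A$ and the Golden--Thompson inequality are valid for arbitrary Hermitian $A$. Hence \eqref{eq:mgf-reduction} and $\Tr\exp(\log P+\eta A)\le\Tr(Pe^{\eta A})$ also carry over.

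\emph{Exponential bound.} The first place positivity entered is the scalar bound $e^x\le1+x+x^2$. I will replace the interval $[0,1]$ by $[-1,1]$. For $x\le0$ the inequality follows from $e^x\le1+x+x^2/2$, which holds for all $x\le0$ by Taylor's theorem with remainder. For $x\in[0,1]$ it is the inequality already used. Since $-I\preceq\eta A\preceq I$, the spectral calculus gives
\[
  e^{\eta A}\preceq I+\eta A+\eta^2A^2.
\]
Pairing with $P\succeq0$ yields $\Tr(Pe^{\eta A})\le 1+x+y$, where $x=\eta\ip{A}{P}$ and $y=\eta^2\ip{A^2}{P}\ge0$.

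\emph{Logarithm step.} The second place positivity entered is $\log(1+x+y)\le x+y$, which was stated for $x,y\ge0$; now $x$ can be negative. The fix is to use $\log(1+z)\le z$, valid for all $z>-1$. The argument $1+x+y$ is at least $\Tr(Pe^{\eta A})>0$, since $P$ is a nonzero positive operator and $e^{\eta A}\succ0$. So $z=x+y>-1$, and
\[
  \log\Tr\exp(\log P+\eta A)-\eta\ip{A}{P}\le\eta^2\ip{A^2}{P}.
\]

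\emph{Conclusion.} Combining this with \eqref{eq:threepoint-gap}, dividing by $\eta$, and telescoping the relative entropies over $t$ gives the claim exactly as before.

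The only genuine obstacle is these two scalar inequalities. I expect no further difficulty, because $A^2\succeq0$ regardless of the sign of $A$, so the variance term keeps its form.
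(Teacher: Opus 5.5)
Your proposal is correct and follows the paper's own proof: rerun \cref{thm:second-order-md} and extend the scalar bound $e^x\le1+x+x^2$ from $[0,1]$ to $[-1,1]$ via functional calculus. Your separate check of the final step is a detail the paper leaves implicit. With $x=\eta\ip{A}{P}$ now possibly negative, the step needs $\log(1+z)\le z$ for all $z>-1$, not the stated version for $x,y\ge0$.
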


\begin{proof}
The proof is identical to Theorem~\ref{thm:second-order-md} except for the scalar inequality.  If $\eta\norm{A_t}_\infty\le1$, then the spectrum of $\eta A_t$ lies in $[-1,1]$.  The elementary bound $e^x\le1+x+x^2$ for $x\in[-1,1]$ gives, by functional calculus,
\[
  e^{\eta A_t}\preceq I+\eta A_t+\eta^2 A_t^2.
\]
The Golden-Thompson and Gibbs-variational steps then give the same second-order term.
\end{proof}

\begin{theorem}[Unbiased effect-estimate pseudo-regret]
\label{thm:unbiased-feedback}
Suppose that, after choosing $\rho_t$ and $J_t$, the learner receives a random Hermitian estimator $\widehat G_t$ satisfying
\[
  \mathbb E[\widehat G_t\mid\mathcal F_t]=G_t,
\]
where $0\preceq G_t\preceq I$ is the true payoff effect and $\mathcal F_t$ is the history before the feedback at round $t$.  Set
\[
  \widehat A_t=d(\widehat G_t\ot\rho_t^{\mathsf T})
\]
and update the Choi learner with $\widehat A_t$ instead of $A_t=d(G_t\ot\rho_t^{\mathsf T})$.  If $\eta\norm{\widehat A_t}_\infty\le1$ almost surely, then for every fixed comparator channel $\Lambda^\star$ with Choi state $J^\star$,
\[
  \mathbb E\sum_{t=1}^T
  \Tr\!\left[G_t\bigl(\Lambda^\star(\rho_t)-\rho_t\bigr)\right]
  \le
  \frac{2\log d}{\eta}
  +\eta\sum_{t=1}^T
  \mathbb E\,\ip{\widehat A_t^2}{J_t}.
\]
Equivalently,
\[
  \sup_{\Lambda^\star}\mathbb E\sum_{t=1}^T
  \Tr\!\left[G_t\bigl(\Lambda^\star(\rho_t)-\rho_t\bigr)\right]
  \le
  \frac{2\log d}{\eta}
  +\eta V_T,
\]
where
\[
  V_T:=\sum_{t=1}^T\mathbb E\,\ip{\widehat A_t^2}{J_t}.
\]
Thus any partial-feedback scheme that supplies unbiased effect estimates with Choi-variance budget $V_T$ obtains pseudo-regret at most $2\sqrt{2V_T\log d}$ when $\eta=\sqrt{2\log d/V_T}$ is feasible.  This statement does not by itself bound $\mathbb E\CReg_T$, where the supremum over comparator channels is inside the expectation.
\end{theorem}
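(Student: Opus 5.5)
The plan is to run the deterministic analysis of \cref{thm:upper} pathwise with $\widehat A_t$ in place of $A_t$, and then take expectations at the very end. The key point is that the fixed point $\rho_t$ and the iterate $J_t$ are $\mathcal F_t$-measurable, while $\widehat G_t$ is the only new randomness in round $t$.

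First, I fix a comparator channel $\Lambda^\star$ with Choi state $J^\star\in\mathcal C_d$. The estimates $\widehat A_t$ are Hermitian but possibly indefinite, and $\eta\norm{\widehat A_t}_\infty\le1$ almost surely. So Lemma~\ref{lem:signed-second-order} applies on every sample path with $\mathcal K=\mathcal C_d$ and $P_1=J_1=I/d^2$. Together with Lemma~\ref{lem:choi-radius}, which gives $D(J^\star\|J_1)\le2\log d$, this yields the pathwise bound
\[
  \sum_{t=1}^T\ip{\widehat A_t}{J^\star-J_t}\le\frac{2\log d}{\eta}+\eta\sum_{t=1}^T\ip{\widehat A_t^2}{J_t}.
\]
Lemma~\ref{lem:full-rank-iterates} still guarantees full-rank iterates, because the $\widehat A_t$ are bounded almost surely.

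Second, I take expectations of the left side term by term. By the tower property, $\mathbb E\ip{\widehat A_t}{J^\star-J_t}=\mathbb E\ip{\mathbb E[\widehat A_t\mid\mathcal F_t]}{J^\star-J_t}$. This is valid because $J_t$, $\rho_t$ and $J^\star$ are measurable with respect to $\mathcal F_t$, and the map $\widehat G\mapsto d(\widehat G\ot\rho_t^{\mathsf T})$ is linear with $\rho_t$ fixed given $\mathcal F_t$. It follows that $\mathbb E[\widehat A_t\mid\mathcal F_t]=A_t$. By \eqref{eq:fixed-point-payoff} and \eqref{eq:comparator-payoff}, we have $\ip{A_t}{J^\star-J_t}=\Tr[G_t(\Lambda^\star(\rho_t)-\rho_t)]$. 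Summing over $t$ gives the first displayed inequality. Since $J^\star$ was arbitrary and the right side does not depend on it, taking the supremum outside the expectation gives the $V_T$ form. Substituting $\eta=\sqrt{2\log d/V_T}$, under the stated feasibility condition $\eta\norm{\widehat A_t}_\infty\le1$, gives $2\sqrt{2V_T\log d}$.

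The main obstacle is measurability. The true effect $G_t$ may itself be random and chosen by an adversary, so I must make sure it is $\mathcal F_t$-measurable, or else condition jointly on it. I would read the hypothesis $\mathbb E[\widehat G_t\mid\mathcal F_t]=G_t$ as implicitly requiring this, since a conditional expectation given $\mathcal F_t$ is $\mathcal F_t$-measurable. One also has to check that integrability holds, and it does because everything is bounded almost surely. Finally, I would note explicitly why the argument stops at pseudo-regret: the exchange of expectation and supremum goes in only one direction. We get $\sup_{\Lambda^\star}\mathbb E[\cdot]$, but a bound on $\mathbb E\sup_{\Lambda^\star}[\cdot]$ would need uniform concentration over $\mathcal C_d$, which this argument does not provide.
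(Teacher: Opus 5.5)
Your proposal is correct and follows essentially the same argument as the paper. You apply Lemma~\ref{lem:signed-second-order} with the radius bound $D(J^\star\|I/d^2)\le 2\log d$, use $\mathcal F_t$-measurability of $J_t,\rho_t$ with $\mathbb E[\widehat G_t\mid\mathcal F_t]=G_t$ to turn $\mathbb E\ip{\widehat A_t}{J^\star-J_t}$ into the true coherent-swap gain via \eqref{eq:fixed-point-payoff} and \eqref{eq:comparator-payoff}, and take the supremum outside the expectation; stating the pathwise bound before taking expectations is only a cosmetic reordering of the paper's proof.
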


\begin{proof}
Fix a comparator channel with Choi state $J^\star$.  Since $J_t$ and $\rho_t$ are $\mathcal F_t$-measurable,
\[
  \mathbb E\!\left[\ip{\widehat A_t}{J^\star-J_t}\mid\mathcal F_t\right]
  =
  \ip{A_t}{J^\star-J_t}.
\]
Expected regret against this fixed $J^\star$ is therefore the expected signed-gain regret of the Choi learner driven by $\widehat A_t$.  Applying Lemma~\ref{lem:signed-second-order} and using $D(J^\star\|I/d^2)\le2\log d$ gives the first display.  Taking the supremum outside the expectation gives the pseudo-regret form.
\end{proof}

\subsection{A probing-bandit instantiation}
\label{app:probing-bandit}

The previous theorem gives an interface.  We now give one concrete way to satisfy it, at the price of a worse dimension dependence and a pseudo-regret guarantee.  The learner occasionally sacrifices the fixed-point action and probes the unknown payoff effect with a Haar-random pure state.  This is a partial-feedback result, but it is not meant to be the final bandit theory.  It controls
\[
  \sup_{\Lambda}\mathbb E\sum_{t=1}^T
  \Tr\!\left[G_t(\Lambda(\sigma_t)-\sigma_t)\right],
\]
where $\sigma_t$ is the actually played state, not the stronger pathwise quantity $\mathbb E\sup_\Lambda\sum_t\cdots$.

\begin{theorem}[Probing-bandit coherent-swap pseudo-regret]
\label{thm:probing-bandit}
Assume a nonanticipating payoff sequence $0\preceq G_t\preceq I$: conditional on the history before round $t$, $G_t$ is fixed before the learner's current randomization.  At round $t$, let $\rho_t$ be the fixed point of the current learned channel.  With probability $1-\gamma$ the learner plays $\sigma_t=\rho_t$.  With probability $\gamma$, the learner draws a Haar-random pure state $\Pi_t=\ket{\psi_t}\!\bra{\psi_t}$, plays $\sigma_t=\Pi_t$, and observes one Bernoulli payoff sample $Y_t$ with
\[
  \mathbb E[Y_t\mid \Pi_t,G_t]=\Tr(G_t\Pi_t).
\]
On probe rounds define
\[
  \widetilde G_t=d(d+1)Y_t\Pi_t-dY_t I,
\]
and on non-probe rounds set $\widetilde G_t=0$.  The Choi learner is updated using
\[
  \widehat G_t=\frac{1}{\gamma}\widetilde G_t,
  \qquad
  \widehat A_t=d(\widehat G_t\ot\rho_t^{\mathsf T}).
\]
If $T\ge d^4\log d$ and
\[
  \gamma=\left(\frac{d^4\log d}{T}\right)^{1/3},
  \qquad
  \eta=\sqrt{\frac{\gamma\log d}{d^4T}},
\]
then the expected coherent-swap pseudo-regret against the actually played states satisfies
\[
  \sup_{\Lambda\in\CPTP(d)}
  \mathbb E\sum_{t=1}^T
  \Tr\!\left[G_t(\Lambda(\sigma_t)-\sigma_t)\right]
  \le
  6d^{4/3}T^{2/3}(\log d)^{1/3}.
\]
The same result holds if Haar-random pure states are replaced by an exact complex projective $2$-design.
\end{theorem}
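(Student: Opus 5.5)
The plan is to split the played-state regret into probe rounds and fixed-point rounds. The fixed-point part is then handled by \cref{thm:unbiased-feedback}, with the Choi-variance budget computed from Haar (or $2$-design) moments.

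\textbf{Step 1: decomposition.} Fix a comparator $\Lambda$. Conditional on the history $\mathcal F_t$, $G_t$, $J_t$ and $\rho_t$ are fixed, and $\sigma_t=\rho_t$ with probability $1-\gamma$. Writing $a_t=\Tr[G_t(\Lambda(\rho_t)-\rho_t)]\in[-1,1]$, a single round contributes
\[
  (1-\gamma)a_t+\gamma\,\mathbb E\,\Tr[G_t(\Lambda(\Pi_t)-\Pi_t)]\le a_t+2\gamma .
\]
Summing over rounds, the total is at most $\mathbb E\sum_t a_t+2\gamma T$. It therefore suffices to bound the pseudo-regret of the Choi learner along the fixed points $\rho_t$. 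This is exactly the quantity in \cref{thm:unbiased-feedback}, since the learner is updated with $\widehat A_t=d(\widehat G_t\ot\rho_t^{\mathsf T})$.

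\textbf{Step 2: unbiasedness.} The $2$-design identity $\mathbb E[\Pi\,\Tr(G\Pi)]=(G+\Tr(G)\,I)/(d(d+1))$ gives $\mathbb E[Y_t]=\Tr(G_t)/d$. Hence
\[
  \mathbb E[\widetilde G_t\mid\mathcal F_t,\text{probe}]=G_t,
\]
and so $\mathbb E[\widehat G_t\mid\mathcal F_t]=G_t$. Only second moments of $\Pi$ enter, which is why an exact complex projective $2$-design suffices.

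\textbf{Step 3: step-size condition.} We have $\norm{\widetilde G_t}_\infty\le d(d+1)+d\le 2d^2$ (up to harmless constants), hence $\norm{\widehat A_t}_\infty\lesssim 2d^3/\gamma$. With the stated $\gamma$ and $\eta$, $\gamma^3=d^4\log d/T$, and therefore
\[
  \eta\cdot\frac{2d^3}{\gamma}=2d^3\sqrt{\frac{\log d}{d^4T\gamma^3}}=\frac{2}{d}\le1 .
\]
So the condition $\eta\norm{\widehat A_t}_\infty\le1$ of \cref{lem:signed-second-order} holds almost surely. Also $\gamma\le1$, because $T\ge d^4\log d$.

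\textbf{Step 4: variance (main obstacle).} The crude bound $\norm{\widehat G_t}^2_\infty$ would give $d^5/\gamma$. I would instead compute the square exactly. Using $Y^2=Y$ and $\Pi^2=\Pi$,
\[
  \widetilde G^2=d^2Y\bigl((d^2-1)\Pi+I\bigr).
\]
The design identity gives $\mathbb E[Y\Pi]\preceq I/d$, and $\mathbb E Y\le1$, so $\mathbb E[\widetilde G^2\mid\mathcal F_t]\preceq 2d^3 I$. Since only probe rounds contribute,
\[
  \mathbb E[\widehat G_t^2\mid\mathcal F_t]\preceq \frac{2d^3}{\gamma}\,I .
\]
Conditioning on $\mathcal F_t$ and applying the marginal argument of \cref{lem:variance-collapse},
\[
  \mathbb E\ip{\widehat A_t^2}{J_t}\le d^2\cdot\frac{2d^3}{\gamma}\cdot\frac{\Tr\rho_t^2}{d}\le\frac{2d^4}{\gamma},
\]
so $V_T\le 2d^4T/\gamma$. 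Getting this operator-level second moment, rather than a norm bound, is the delicate step.

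\textbf{Step 5: assembly.} \cref{thm:unbiased-feedback} gives
\[
  \mathbb E\sum_t a_t\le\frac{2\log d}{\eta}+\frac{2\eta d^4T}{\gamma}.
\]
With the stated $\eta$, each of these two terms equals $2\sqrt{d^4T\log d/\gamma}=2d^{4/3}T^{2/3}(\log d)^{1/3}$. The stated $\gamma$ also gives $2\gamma T=2d^{4/3}T^{2/3}(\log d)^{1/3}$. The three terms together give $6d^{4/3}T^{2/3}(\log d)^{1/3}$. Since the bound holds for each fixed $\Lambda$, taking the supremum outside the expectation yields the stated pseudo-regret bound.
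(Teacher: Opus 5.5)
Your proposal is correct and follows the paper's proof essentially step for step. The shared steps are the $2\gamma T$ charge for probe rounds, unbiasedness from the $2$-design moment identity, the operator bound $\mathbb E[\widetilde G_t^2\mid\mathcal F_t,\text{probe}]\preceq 2d^3 I$ combined with the marginal $\Tr_{\rm out}J_t=I/d$ to get $V_T\le 2d^4T/\gamma$, and the same assembly $4d^{4/3}T^{2/3}(\log d)^{1/3}+2d^{4/3}T^{2/3}(\log d)^{1/3}$.

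There is one arithmetic slip in Step 3. In fact $\eta/\gamma=\sqrt{\log d/(d^4T\gamma)}$, not with $\gamma^3$, so $\eta\cdot 2d^3/\gamma=2(d\log d/T)^{1/3}$. This is $\le 2/d\le 1$ precisely because $T\ge d^4\log d$. Your conclusion stands, but that hypothesis is genuinely needed for the step-size condition, not only for $\gamma\le 1$.
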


\begin{proof}
We first verify unbiasedness.  For Haar-random $\Pi=\ket\psi\!\bra\psi$,
\[
  \mathbb E_\psi[\Tr(G\Pi)\Pi]
  =\frac{G+\Tr(G)I}{d(d+1)},
  \qquad
  \mathbb E_\psi[\Tr(G\Pi)]=\frac{\Tr G}{d}.
\]
Hence, on averaging over $\Pi$ and the Bernoulli sample $Y$,
\[
  \mathbb E[d(d+1)Y\Pi-dYI]=G.
\]
The additional Bernoulli exploration coin and the factor $1/\gamma$ therefore give
\[
  \mathbb E[\widehat G_t\mid\mathcal F_t]=G_t,
\]
where $\mathcal F_t$ is the history before the current randomization.  Thus Theorem~\ref{thm:unbiased-feedback} applies to the fixed-point sequence $\rho_t$ once we control the Choi variance.

For $M=d(d+1)Y\Pi-dYI$, $\norm M_\infty\le d^2$, so
\[
  \norm{\widehat A_t}_\infty\le \frac{d^3}{\gamma}.
\]
The displayed learning rate is feasible because $T\ge d^4\log d$ implies $\eta d^3/\gamma\le1$.  Moreover, since $Y^2=Y$ and $0\le\Tr(G\Pi)\le1$,
\begin{align*}
  \mathbb E[M^2]
  &\preceq
  \mathbb E_\psi[(d(d+1)\Pi-dI)^2]  \\
  &=d^2\mathbb E_\psi[(d^2-1)\Pi+I]
  \preceq 2d^3 I.
\end{align*}
Consequently, using $\Tr_{\rm out}J_t=I/d$,
\begin{align*}
  \mathbb E\bigl[\ip{\widehat A_t^2}{J_t}\mid\mathcal F_t\bigr]
  &\le
  \frac{d^2}{\gamma}
  \Tr\!\left[J_t\bigl(2d^3 I\ot\bigl(\rho_t^{\mathsf T}\bigr)^2\bigr)\right] \\
  &=\frac{2d^5}{\gamma}
  \Tr\!\left[(I/d)\bigl(\rho_t^{\mathsf T}\bigr)^2\right]
  \le \frac{2d^4}{\gamma}.
\end{align*}
The unbiased-feedback theorem gives pseudo-regret against the fixed-point sequence at most
\[
  \frac{2\log d}{\eta}+\eta\frac{2d^4T}{\gamma}.
\]

It remains to relate the fixed-point sequence to the actually played sequence.  For a fixed comparator $\Lambda$, the one-round deviation gain lies in $[-1,1]$.  If the learner exploits, it plays $\rho_t$; if it probes, it plays $\Pi_t$.  Therefore
\begin{align*}
  \mathbb E[\Tr G_t(\Lambda(\sigma_t)-\sigma_t)\mid\mathcal F_t]
  &\le
  \Tr G_t(\Lambda(\rho_t)-\rho_t)+2\gamma.
\end{align*}
Summing and taking the supremum over $\Lambda$ outside the expectation adds at most $2\gamma T$.

With the displayed parameters,
\[
  \frac{2\log d}{\eta}+\eta\frac{2d^4T}{\gamma}
  =4d^{4/3}T^{2/3}(\log d)^{1/3},
  \qquad
  2\gamma T=2d^{4/3}T^{2/3}(\log d)^{1/3}.
\]
Thus the total contribution is
\[
  4d^{4/3}T^{2/3}(\log d)^{1/3}
  +2d^{4/3}T^{2/3}(\log d)^{1/3}
  =6d^{4/3}T^{2/3}(\log d)^{1/3},
\]
which proves the stated bound.  The proof used only the first two Haar moments, so any exact complex projective $2$-design gives the same identities.
\end{proof}

\end{document}